\newcommand{\myparagraph}[1]{\textbf{\emph{#1}}.}
\newenvironment{lenumerate}[2][]
{\begin{enumerate}[label=(#2\arabic*),leftmargin=0.2in,itemindent=0.15in,#1]}
{\end{enumerate}}
\setlist*[enumerate,1]{label={\itshape\arabic*)}}
\newcommand{\paragraphswithstop}{%
\let\copyparagraph\paragraph%
\renewcommand\paragraph[1]{\copyparagraph{##1.}}%
}
\newsavebox{\boxifnotempty}
\newcommand{\displayifnotempty}[3]{\sbox\boxifnotempty{#2}\setbox0=\hbox{\usebox{\boxifnotempty}\unskip}%
\ifdim\wd0=0pt
\else
 #1\usebox{\boxifnotempty}#3%
\fi%
}
\newcommand{\ifempty}[2]{\setbox0=\hbox{#1\unskip}%
\ifdim\wd0=0pt%
 #2%
\fi%
}
\newcommand{\ifnotempty}[2]{\setbox0=\hbox{#1\unskip}%
\ifdim\wd0>0pt%
 #2%
\fi%
}
\newcommand*\newstoreddef[1]{
  \BeforeClosingMainAux{%
    \immediate\write\@auxout{%
      \string\restoredef{#1}{\csname #1\endcsname}%
    }%
  }%
}
\newcommand*{\restoredef}[2]{
  \expandafter\gdef\csname stored@#1\endcsname{#2}%
}
\newcommand*{\storeddef}[1]{
  \@ifundefined{stored@#1}{0}{\csname stored@#1\endcsname}%
}
\newcommand{\real}[1]{\mathbb{R}^{#1}{}}
\newcommand{\bmat}[1]{\begin{bmatrix}#1\end{bmatrix}}
\newcommand{\transpose}{^\mathrm{T}}
\newcommand{\inverse}{^{-1}}
\DeclarePairedDelimiter{\abs}{\lvert}{\rvert}
\DeclarePairedDelimiter{\norm}{\lVert}{\rVert}
\newcommand{\vct}[1]{\mathbf{#1}}
\DeclareMathOperator{\stack}{stack}
\newcommand{\subjectto}{\textrm{subject to }}
\providecommand{\cC}{\mathcal{C}}
\providecommand{\cE}{\mathcal{E}}
\providecommand{\cG}{\mathcal{G}}
\providecommand{\cI}{\mathcal{I}}
\providecommand{\cL}{\mathcal{L}}
\providecommand{\cO}{\mathcal{O}}
\providecommand{\cP}{\mathcal{P}}
\providecommand{\cS}{\mathcal{S}}
\providecommand{\cT}{\mathcal{T}}
\providecommand{\cU}{\mathcal{U}}
\providecommand{\cV}{\mathcal{V}}
\providecommand{\cX}{\mathcal{X}}
\providecommand{\cY}{\mathcal{Y}}
\newcommand{\newcolorlabel}[2]{%
  \expandafter\newcommand\csname #1\endcsname[1]{%
    \colorbox{#2}{\color{white}\textsf{\textbf{##1}}}}%
}
\newcommand{\newcommenter}[2]{%
  \expandafter\newcommand\csname #1\endcsname[1]{%
    \fcolorbox{#2}{#2}{\color{white}\textsf{\textbf{#1}}}
    {\color{#2}##1}}%
  \expandafter\newcommand\csname at#1\endcsname{%
    \fcolorbox{#2}{#2}{\color{white}\textsf{\textbf{@#1}}}
    {\color{#2}}}%
  \expandafter\newcommand\csname #1hl\endcsname[2]{%
    \colorbox{#2}{\color{white}\textsf{\textbf{#1}}}\sethlcolor{Azure2}\hl{##2}~%
    \expandafter\ifx\csname commentarrow\endcsname\relax$\leftarrow$\else \commentarrow[#2]\fi~%
    {\color{#2}##1}}%
  \expandafter\newcommand\csname #1st\endcsname[2]{%
    \colorbox{#2}{\color{white}\textsf{\textbf{#1}}}\sout{##2}~%
    \expandafter\ifx\csname commentarrow\endcsname\relax$\leftarrow$\else \commentarrow[#2]\fi~%
    {\color{#2}##1}}%
}
\tikzset{
  dim above/.style={to path={\pgfextra{
        \pgfinterruptpath
        \draw[>=latex,|->|] let
        \p1=($(\tikztostart)!1.5em!90:(\tikztotarget)$),
        \p2=($(\tikztotarget)!1.5em!-90:(\tikztostart)$)
        in(\p1) -- (\p2) node[pos=.5,sloped,above]{#1};
        \endpgfinterruptpath
      }
    }
  },
  dim double above/.style={to path={\pgfextra{
        \pgfinterruptpath
        \draw[>=latex,|->|] let
        \p1=($(\tikztostart)!3em!90:(\tikztotarget)$),
        \p2=($(\tikztotarget)!3em!-90:(\tikztostart)$)
        in(\p1) -- (\p2) node[pos=.5,sloped,above]{#1};
        \endpgfinterruptpath
      }
    }
  },
  dim below/.style={to path={\pgfextra{
        \pgfinterruptpath
        \draw[>=latex,|->|] let 
        \p1=($(\tikztostart)!-1em!-90:(\tikztotarget)$),
        \p2=($(\tikztotarget)!-1em!90:(\tikztostart)$)
        in (\p1) -- (\p2) node[pos=.5,sloped,below]{#1};
        \endpgfinterruptpath
      }
    }
  },
}
\tikzset{
    right angle quadrant/.code={
        \pgfmathsetmacro\quadranta{{1,1,-1,-1}[#1-1]}     
        \pgfmathsetmacro\quadrantb{{1,-1,-1,1}[#1-1]}},
    right angle quadrant=1, 
    right angle length/.code={\def\rightanglelength{#1}},   
    right angle length=2ex, 
    right angle symbol/.style n args={3}{
        insert path={
            let \p0 = ($(#1)!(#3)!(#2)$) in     
                let \p1 = ($(\p0)!\quadranta*\rightanglelength!(#3)$), 
                \p2 = ($(\p0)!\quadrantb*\rightanglelength!(#2)$) in 
                let \p3 = ($(\p1)+(\p2)-(\p0)$) in  
            (\p1) -- (\p3) -- (\p2)
        }
    }
}
\newcommand{\pgfextractangle}[3]{%
    \pgfmathanglebetweenpoints{\pgfpointanchor{#2}{center}}
                              {\pgfpointanchor{#3}{center}}
    \global\let#1\pgfmathresult  
}
\newcommand{\commentarrow}[1][Azure4]{\tikz[baseline=-3pt]{\node[shape border uses incircle, fill=#1,rotate=180,single arrow, inner sep=1pt, minimum size=6pt, single arrow head extend=2pt]{};}}
\tikzset{ax/.style={-latex,line width=2pt}}
\tikzset{camera/.style={fill=Sienna1,fill opacity=0.5},%
image plane/.style={draw=RoyalBlue3,line width=2pt}}
\newcommand{\rrtstar}{{\texttt{RRT$^*$}}}
\def\BibTeX{{\rm B\kern-.05em{\sc i\kern-.025em b}\kern-.08em
    T\kern-.1667em\lower.7ex\hbox{E}\kern-.125emX}}
\begin{document}
\title{Designing Robust Linear Output Feedback Controller based on CLF-CBF framework via Linear~Programming(LP-CLF-CBF)}
\author{Mahroo Bahreinian$^{1}$, Mehdi Kermanshah $^{2}$, Roberto Tron$^{3}$
  \thanks{This work was supported by ONR MURI N00014-19-1-2571 ``Neuro-Autonomy: Neuroscience-Inspired Perception, Navigation, and Spatial Awareness''}
  \thanks{$^{1}$Mahroo Bahreinian is with Division of Systems Engineering at Boston University, Boston, MA, 02215 USA. Email:
    {\tt\small mahroobh@bu.edu}}%
\thanks{$^{2}$Mehdi Kermanshah is with Department of Mechanical at Boston University, Boston, MA, 02215 USA. Email:
    {\tt\small mker@bu.edu}}%
  \thanks{$^{3}$Roberto Tron is with Faculty of Department of Mechanical Engineering at Boston University, Boston, MA, 02215 USA. Email:
    {\tt\small tron@bu.edu}}}

\maketitle

\begin{abstract}
We consider the problem of designing output feedback controllers that use measurements from a set of landmarks to navigate through a cell-decomposable environment using duality, Control Lyapunov and Barrier Functions (CLF, CBF), and Linear Programming. We propose two objectives for navigating in an environment, one to traverse the environment by making loops and one by converging to a stabilization point while smoothing the transition between consecutive cells. We test our algorithms in a simulation environment, evaluating the robustness of the approach to practical conditions, such as bearing-only measurements, and measurements acquired with a camera with a limited field of view.

\end{abstract}
\section{INTRODUCTION}

Path planning is a major research domain within mobile robotics, involved primarily with the finding of a nominal trajectory from an initial state to a goal, ensuring collision avoidance. Classical path planning methods focus on finding a  \textit{single, nominal paths} within a static and \textit{pre-known map}. These algorithms often assume that the robotic agent is equipped with a lower-level \textit{state feedback} controller, which enables tracking the nominal path despite the presence of extrinsic perturbations and inaccuracies in the model.
In contrast, biological systems demonstrate a more flexible approach. Take, for example, a person navigating through an unfamiliar room: despite the absence of a detailed layout of the space and precise self-localization, the individual can navigate with remarkable reliability and robustness toward a desired exit. This capability in biological systems stems from complex processes that are yet to be fully understood. 

In this paper, we aim to bridge the gap between algorithmic path planning and the inherent capabilities observed in biological systems. We propose the synthesis of \textit{output-feedback controllers} that is robust to \textit{inexact map awareness}. By focusing on controller synthesis rather than fixed-path generation, we integrate the high-level path planning with the low-level control processes. 
Furthermore, the focus on controller-based planning allows for the direct utilization of measurements available to the agent, instead of assuming full state knowledge; finally, since the controllers depend on the environment indirectly (through measurements that are taken online), we empirically show that such controllers are robust to (often very significant) changes in the map. In order to pursue strong theoretical guarantees, in this paper, we assume agents with controllable linear dynamics, and environments that admit a polygonal convex cell decomposition (e.g., via Delaunay triangulations \cite{fortune1992voronoi} or trapezoidal decompositions \cite{latombe2012robot}). Methods to address these limitations are planned as part of our future work (see also the Conclusions section).

\myparagraph{Related works}
Existing works on path planning can be roughly classified into two categories: combinatorial path planning methods, and sample-based path planning methods \cite{comparative}.
Some of the path planning methods consider a continuous model for the environment and therefore provide a continuous path, such as potential fields \cite{khatib1986real}, \cite{krogh1984generalized} and navigation functions \cite{rimon1992exact}, while the other group solves the planning problem by abstracting the environment to a finite representation and find a discrete path, such as probabilistic roadmaps \cite{kavraki1996probabilistic} and cell decomposition methods \cite{lingelbach2004path}.

One of the well-known combinatorial path planning algorithms is cell decomposition, where a complex environment is decomposed into a set of cells, avoiding obstacles by planning straight paths in individual cells; for each individual step, traditional methods use midpoints \cite{lavalle2006planning, schurmann2009computational,choset2005principles}, while more recent solutions aim to optimize path length \cite{kloetzer2015optimizing}. Our work can be seen as a descendant of previous work that handles the cell decomposition vis-\'a-vis the continuous dynamic through a hybrid system perspective by synthesizing a state-feedback controller for each cell. Initial work proposed potential-based controllers \cite{conner2003composition}, while others characterize the theoretical conditions \cite{habets2006reachability}
and closed-form solutions \cite{belta2005discrete} for linear affine controllers. Although the latter approaches were extended to nonlinear systems in \cite{girard2008motion} and uncertain maps \cite{yan2008mobile} (using intelligent re-planning), they all assume that each cell in the decomposition is a \emph{simplex} (a polytope in $\real{d}$ with $d+1$ vertices, e.g., a 2-D triangle). In contrast, our method can handle arbitrary convex polytopes, and design \emph{output}-feedback controllers (instead of state-feedback). In this paper, we only consider 'reach-avoid' problems. However, our approach can be extended for broader spatial-temporal Logic specifications such as linear temporal logic (LTL).\cite{wu2009synthesis, kloetzer2008fully, cohen2021model}

Sampling-based planning algorithms, such as rapidly exploring random trees (\texttt{RRT}), have become popular in the last few years due to their good practical performance and their probabilistic completeness \cite{lavalle2006planning,lavalle2001randomized,karaman2011sampling}. For trajectory planning that takes into account non-trivial dynamical systems of the robot, kinodynamic \texttt{RRT} \cite{lavalle2001randomized, lavalle2006planning} and closed-loop \texttt{RRT} (\texttt{CL-RRT}, \cite{kuwata2008motion}) and \texttt{CL-RRT\#} grow the tree by sampling control inputs and then propagating forward the nonlinear dynamics (with the optional use of stabilizing controllers and tree rewiring to approach optimality). Further, in this line of work, there has been a relatively smaller amount of work on algorithms that focus on producing controllers instead of simple reference trajectories.
The \texttt{safeRRT} algorithm \cite{positiveInvariant,weiss2017motion} generates a closed-loop trajectory from the initial state to the desired goal by expanding a tree of local state-feedback controllers to maximize the volume of corresponding positive invariant sets while satisfying the input and output constraints. 
Based on the same idea and following the \texttt{RRT} approach, the \texttt{LQR-tree} algorithm \cite{tedrake2009lqr} creates a tree by sampling over state space and stabilizes the tree with a linear quadratic regulator (LQR) feedback. With respect to the present paper, the common trait among all these works is the use of full-state feedback (as opposed to output feedback), although they do not require prior knowledge of the convex cell decomposition of the environment.

Finally, our work builds upon the real-time synthesis of point-wise controls that trade off safety and stability for nonlinear input-affine systems through a Quadratic Program (QP) formulation \cite{ames2014control,hsu2015control}. To the best of our knowledge, our paper is the first to use similar conditions for synthesizing controls over entire convex regions rather than single points.

\myparagraph{Previous work contributions}
A preliminary version of this work was published in \cite{bahreinian2021robust}. In this work, we proposed a novel approach to synthesize a set of output-feedback controllers on a convex cell decomposition of a polygonal environment via Linear Programming (LP). We defined constraints in terms of a Control Lyapunov Function (CLF) and Control Barrier Functions (CBF) to ensure, respectively, stability and safety (collision avoidance) throughout all the states in a cell while automatically balancing the two aspects to maximize robustness. Our formulation results in a linear min-max optimization problem, which is solved by converting it to an LP form. The major contributions of that work are:
\begin{itemize}
\item We allow a cell to be any generic convex polytope (instead of a simplex).
\item We consider output feedback based on any affine function of the state (under the natural assumption that the overall dynamics is controllable), although, for the sake of presenting a concrete application, we focus on controls using measurements of the relative position of the agent with respect to landmarks in the environment.
\item We apply the CLF-CBF to the new framework of control synthesis.
\end{itemize}
\myparagraph{Contributions of this work}
We integrated our solution with the sample-based method in \cite{bahreinian2021rrt} and introduced Gaussian noise to measurements in \cite{wang2021chance}. We extended this approach for probabilistic measurements with bounded uncertainty in \cite{kermanshah2023control}.
In previous works, the environment is decomposed to a set of convex cells; then, the robot drives through cells by switching between controllers. The main contributions of this work are as follows:
\begin{itemize}
    \item Propose a new cost function that smoothens the transition between consecutive cells.
    \item Modify the control synthesize problem to address  cases where the stabilization is in the middle of the cell
    \item Providing theoretical proof for the stability of this modified version
    \item Extending this approach to use only bearing measurements of landmarks
\end{itemize}
After introducing some preliminary definitions (Section~\ref{sec:notation}), we introduce the problem statement and propose our solution (Section~\ref{sec:problem setup}, and then we analyze the stability of the solution mathematically (Section~\ref{sec:stationary}). We conclude the paper with a few illustrative numerical examples (Section~\ref{sec:examples} and Section~\ref{sec:conclusion}).
\section{NOTATION AND PRELIMINARIES}\label{sec:notation}

In this section, we review CLF and CBF constraints in the context of our application on agents with linear dynamics and a convex cell decomposition of the environment.
\subsection{System dynamics}
We start by considering a control-affine dynamical system\footnote{The CLF-CBF concepts are applicable to input-affine systems, but in this work, we assume linear time-invariant systems and affine barrier functions.}

where $x \in \cX\subset\real{n}$ denotes the state, $u\in\cU\subset\real{m}{}$ is the system input, and $A\in\real{n\times n}$, $B\in\real{n\times m}$ define the linear dynamics of the system. We assume that the pair $(A,B)$ is controllable, and that  $\cX$ and $\cU$ are polytopic,
\begin{align}\label{state_limits}
  \cX=\{x\mid A_{x}x\leq b_{x}\},&& \cU=\{u\mid A_{u}u\leq b_u\},
\end{align}
and that $0\in\cU$.
We assume that the robot has linear dynamics of the form
\begin{equation}\label{sys1}
  \dot{x}=Ax+Bu,
\end{equation}
In our case, $\cX$ will be a convex cell centered around a sample in the tree (Section~\ref{sec:simplified-tree}).
\begin{definition}
    We divide the state of systems $x$ into two parts $x_{p} \in \mathbb{R}^{n_p}, x_{d} \in \mathbb{R}^{n_d}$ where $x_{p} = P_{p} x$ is the position of the system and $x_{d} =P_{d} x$ the rest of states ($n_p + n_d = n$), where $P_{p} \in \mathbb{R}^{n_p \times n}, P_{d} \in \mathbb{R}^{n_d \times n}$ are orthogonal projection matrices. 
\end{definition}
\begin{definition}
    We only consider constraints decoupled constraints on $x_{p}$ and $x_{d}$. Thus, we can divide $\cX$ into to sets $ \cX_{p}$ and $\cX_{\text{dyn}}$. Where $\cX_{p} = \{x|A_{p}x\leq b_{p}\}$  and $\cX_{\text{dyn}} = \{x|A_{\text{dyn}}x\leq b_{\text{dyn}}\}$ contains all constraints only corresponding for $x_p$ and $x_{d}$ respectively.
\end{definition}
\subsection{Control Lyapunov and Barrier Functions (CLF, CBF)}\label{sec:ECBF}
In this section, we review the CLF and CBF constraints, which are differential inequalities that ensure stability and safety (set invariance) of a control signal $u$ with respect to the dynamics \eqref{sys1}. First, it is necessary to review the following.

\begin{definition}
  The Lie derivative of a differentiable function $h$ for the dynamics \eqref{sys1} with respect to the vector field $Ax$ and $B$ is defined as $\cL_{Ax}h(x)=\frac{\partial h(x(t))}{\partial x}\transpose Ax$ and $\cL_B h(X) =\frac{\partial h(x(t))}{\partial x}\transpose B$.
   The Lie derivative of order $r$ is denoted as $\cL_{Ax}^r$, and is recursively defined by $\cL_{Ax}^{r}h(x)=\cL_{Ax}(\cL_{Ax}^{r-1}h(x))$, with $\cL_{Ax}^1h(x)=\cL_{Ax}h(x)$, respectively.
\end{definition}
\begin{definition}
    \label{def:rel_degree}
 A function $h(x)$ has relative degree $r$ with respect to the dynamics \eqref{sys1} if $\cL_B\cL^i_{Ax}h(x)=0$ for all $ i \leq r-1$ and $\cL_B\cL^r_{Ax}h(x) \neq 0$; equivalently, it is the minimum order of the time derivative of the system, $h^r(x)$, that explicitly depends on the inputs $u$.
 Applying this definition to the system \eqref{sys1} we obtain
\begin{equation}\label{Lie_h}
  h^r(x)=\cL_{Ax}^rh(x)+\cL_B\cL_{Ax}^{r-1}h(x)u
\end{equation}

\end{definition}


We now pass on the definition of the differential constraints.

Consider a continuously differentiable function $h(x):\cX\to\real{}$ which defines a safe set $\cC_0$ such that
\begin{equation}\label{set_c}
  \begin{aligned}
    \cC_0&=\{x\in \real{n}|\;h(x)\geq0\},\\
    \partial \cC_0&=\{x\in \real{n}|\;h(x)=0\},\\
    {Int}(\cC_0)&=\{x\in \real{n}|\;h(x)>0\}.
  \end{aligned}
\end{equation}
We say that the set $\cC_0$ is \emph{forward invariant} (also said \emph{positive invariant} \cite{positiveInvariant}) if  $x(t_0) \in \cC_0$ implies $x(t)\in \cC_0$, for all $t\geq 0$~\cite{zcbf1}.
\begin{definition}
    We recursively define function $\psi_i$ as :
    \begin{equation}
    \begin{aligned}
        \psi_0(x) & = h(x)\\
        \psi_1(x) & = \dot{\psi}_0(x) + \alpha_0\psi_0(x) \geq 0\\
        \psi_r(x,u) &= \dot{\psi}_{r-1}(x) + \alpha_{r-1}\psi_{r-1}(x) \geq 0\\
    \end{aligned}
    \end{equation}
Where $\alpha_0\hdots \alpha_{r-1}$ are positive constants. Set $\cC_i$ is defined as $\cC_i = \{x| \psi_i(x) \geq 0 \}$
\end{definition}
\begin{proposition}[HCBF, \cite{hcbf}]\label{def:ECBF} Consider the control system
\eqref{sys1}, and a continuously differentiable function $h(x)$ with relative degree $r \geq 0$ defining a set $\cC_0$ as in \eqref{set_c}. The function
$h(x)$ is a Higher order Control Barrier Function (HCBF) if a control inputs $u \in \cU$ exist such that
  \begin{equation}\label{cons:cbf}
    \psi_r(x,u) \geq 0,\forall x \in \cC_0.
  \end{equation}
  Furthermore, \eqref{cons:cbf} implies that the set $\cC_0 \cap \cC_1 \hdots \cap \cC_r$ is forward invariant.
\end{proposition}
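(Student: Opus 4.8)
The plan is to reduce the forward-invariance claim to a cascade of scalar Gr\"onwall/comparison inequalities, one per $\psi_i$, evaluated along a closed-loop trajectory. The only tool needed is the elementary fact that if a (locally absolutely continuous) scalar $z(t)$ satisfies $\dot z(t)\ge -\alpha z(t)$ on an interval with $z(t_0)\ge 0$ and $\alpha>0$, then $z(t)\ge z(t_0)e^{-\alpha(t-t_0)}\ge 0$ there; equivalently $t\mapsto e^{\alpha t}z(t)$ is nondecreasing since $\tfrac{d}{dt}\!\left(e^{\alpha_j t}\psi_j(x(t))\right)=e^{\alpha_j t}\psi_{j+1}(x(t))$ by the very definition of the $\psi_i$. (Equivalently one may induct on $r$, viewing $\psi_1$ as a relative-degree-$(r-1)$ barrier function with the shifted constants $\alpha_1,\dots,\alpha_{r-1}$, the base case $r=1$ being the classical zeroing-CBF invariance statement; the cascade below is just this induction unrolled.)

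First I would fix $x(t_0)\in\cC_0\cap\cdots\cap\cC_r$, so $\psi_i(x(t_0))\ge 0$ for $i=0,\dots,r$, and choose an admissible input $u(\cdot)$ with $u(t)\in\cU$ and $\psi_r(x(t),u(t))\ge 0$ whenever $x(t)\in\cC_0$, which exists by \eqref{cons:cbf}. Let $[t_0,T)$ be the maximal interval of existence of the resulting solution along which $x(t)\in\cC_0$. On $[t_0,T)$ the definition of $\psi_r$ gives $\dot\psi_{r-1}(x(t))=\psi_r(x(t),u(t))-\alpha_{r-1}\psi_{r-1}(x(t))\ge-\alpha_{r-1}\psi_{r-1}(x(t))$, so the comparison inequality with $\psi_{r-1}(x(t_0))\ge 0$ yields $\psi_{r-1}(x(t))\ge 0$, i.e.\ $x(t)\in\cC_{r-1}$, on $[t_0,T)$. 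I would then iterate downward: assuming $\psi_{j+1}(x(t))\ge 0$ on $[t_0,T)$, the identity $\dot\psi_j(x(t))=\psi_{j+1}(x(t))-\alpha_j\psi_j(x(t))\ge-\alpha_j\psi_j(x(t))$ together with $\psi_j(x(t_0))\ge 0$ gives $\psi_j(x(t))\ge\psi_j(x(t_0))e^{-\alpha_j(t-t_0)}\ge 0$ on $[t_0,T)$. Carrying this down to $j=0$ shows $h(x(t))=\psi_0(x(t))\ge 0$, hence $x(t)\in\cC_0\cap\cdots\cap\cC_r$ for all $t\in[t_0,T)$.

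It then remains to upgrade this to $T=\infty$ (or to the whole maximal existence interval). Since $\cX$ is a bounded cell the trajectory has no finite escape time, and by continuity of the $\psi_i$ along the solution we get $\psi_i(x(T))\ge 0$ for all $i$; in particular $x(T)\in\cC_0$, so the set of times at which $x(\cdot)\in\cC_0$ is relatively closed, and a standard continuation (Nagumo-type) argument forbids $T<\infty$. This gives $x(t)\in\cC_0\cap\cdots\cap\cC_r$ for all $t\ge t_0$, which is the asserted forward invariance; this is the argument of \cite{hcbf} specialized to the linear dynamics \eqref{sys1} and affine $h$.

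The delicate point is precisely that last closure step: hypothesis \eqref{cons:cbf} is assumed only on $\cC_0$, yet it is used to prove the trajectory never leaves $\cC_0$, so the argument is mildly self-referential and must be phrased on the maximal interval where $x(t)\in\cC_0$ and then closed; the only genuinely subtle case is $h(x(T))=0$, where one invokes the already-derived nonnegativity of $\psi_1(x(T)),\psi_2(x(T)),\dots$ (equivalently a sub-tangentiality/Nagumo condition) to rule out an immediate exit. A secondary technical caveat is smoothness: for the chain $\psi_0,\dots,\psi_{r-1}$ and their derivatives along solutions to be well defined one needs $h\in C^{r}$ rather than merely $C^1$, which I would assume implicitly; with that in hand every other step above is routine.
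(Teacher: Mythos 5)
The paper does not actually prove this proposition: it is stated as an imported result, attributed to the HOCBF literature via the citation, so there is no in-paper argument to compare against. Your proof is the standard one from that literature --- a downward cascade of scalar comparison/Gr\"onwall inequalities, using $\tfrac{\de}{\de t}\bigl(e^{\alpha_j t}\psi_j(x(t))\bigr)=e^{\alpha_j t}\psi_{j+1}(x(t))$ to propagate nonnegativity from $\psi_r$ down to $\psi_0=h$ --- and it is sound. You are also right to flag the one genuinely delicate point, which the cited literature itself tends to gloss over: the hypothesis \eqref{cons:cbf} is only assumed on $\cC_0$, so the cascade must be run on the maximal interval where $x(t)\in\cC_0$ and then closed by a continuation/sub-tangentiality argument at a boundary touching time. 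Your handling of that step is acceptable; it is worth noting that in this paper's actual use of the proposition the issue evaporates, because the optimization \eqref{opt-feasibility} enforces the CBF inequality over the entire polytopic cell $\cX$ (a closed superset of the relevant safe set), and $h$ is affine with linear dynamics, so the solutions are analytic and globally defined and the $C^r$ caveat you raise is automatic. In short: correct, standard, and if anything more careful than the source being cited.
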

For simpler notation, \eqref{cons:cbf} can be written as:
\begin{equation}
    \psi_r(x) = \cL_{Ax}^rh(x)+\cL_{Ax}^{r-1}\cL_Bh(x)u+c_b\transpose\xi_h(x) \geq 0
\end{equation}
where $\xi_h$ contains all lower order derivative of function $h(x)$
\begin{equation}
\label{eq:eh}\xi_h  =  \begin{bmatrix}
h(x) \\ \cL_{Ax}h(x) \\ \vdots \\ \cL_{Ax}^{r-1}h(x)
\end{bmatrix}  c_b = 
\begin{bmatrix}
\sum_i \alpha_i \\ \sum_{i1,i2} \alpha_{i1} \alpha_{i2} \\ \vdots \\ \sum_{i1 \hdots ir} \alpha_{i1} \hdots \alpha_{ir}
\end{bmatrix}
\end{equation}
and the  $i$-th element of $c_b$ equals to summation of all possible permutation of $\alpha_{i1} \hdots \alpha_{ir}$.

Consider a continuously differentiable function $V(x):\cX\to\real{}$, $V(x)\geq 0$ for all $x\in\cX$, with $V(x)=0$ for some $x\in\cX$.
\begin{proposition}
    \label{def:ECLF}
  The positive definite function $V(x)$ is a \textit{Higher order Control Lyapunov Function} (HCLF) \cite{nguyen2016exponential} with respect to \eqref{sys1} if there exists positive constant vector $c_l$ and control inputs $u\in \cU$ such that
  \begin{equation}\label{cons:clf}
    \begin{aligned}
\cL_{Ax}^{r}V(x)+\cL_{Ax}^{r-1}\cL_BV(x)u+c_l \transpose \xi_V(x)\leq 0,\forall x \in \cX.
    \end{aligned}
  \end{equation}
  Where, $\xi_V(x)$ contains lower order derivative similar to \eqref{eq:eh}. Furthermore, \eqref{cons:clf} implies that $\lim_{t\to\infty}V(x(t))=0$.
\end{proposition}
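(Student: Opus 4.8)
The plan is to reduce \eqref{cons:clf} to a cascade of first–order comparison inequalities and then propagate decay from the top of the chain down to $V$ itself. First I would unpack the notation: since $V$ has relative degree $r$ with respect to \eqref{sys1}, the first $r-1$ time derivatives of $V$ along any closed–loop trajectory are $u$–independent, so $\xi_V(x(t)) = \big[V(x(t)),\ \dot V(x(t)),\ \dots,\ V^{(r-1)}(x(t))\big]\transpose$ and $V^{(r)}(x(t)) = \cL_{Ax}^{r}V(x) + \cL_{Ax}^{r-1}\cL_B V(x)\,u$. Hence the left–hand side of \eqref{cons:clf} is exactly $V^{(r)}(t) + c_l\transpose\xi_V(t)$, and the hypothesis says a control exists keeping this quantity nonpositive for all $t\ge 0$. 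Mirroring the recursion used to build the $\psi_i$, I would let $c_l$ be generated by positive gains $\alpha_0^V,\dots,\alpha_{r-1}^V$ exactly as $c_b$ is generated by the $\alpha_i$ in \eqref{eq:eh}, and set $\psi_0^V = V$, $\psi_i^V = \dot\psi_{i-1}^V + \alpha_{i-1}^V\psi_{i-1}^V$. Then the operator $\prod_{i=0}^{r-1}(\tfrac{\de}{\de t}+\alpha_i^V)$ applied to $V$ gives $\psi_r^V = V^{(r)} + c_l\transpose\xi_V \le 0$, and by construction each $\psi_i^V$ is a fixed linear combination of the entries of $\xi_V$, so the signals obey the triangular cascade $\dot\psi_{i-1}^V = \psi_i^V - \alpha_{i-1}^V\psi_{i-1}^V$ for $i=1,\dots,r$, with the nonpositive term $\psi_r^V\le 0$ forcing only the last equation.

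Next I would run the comparison lemma from the top of the cascade downward. The last line reads $\dot\psi_{r-1}^V = -\alpha_{r-1}^V\psi_{r-1}^V + \psi_r^V \le -\alpha_{r-1}^V\psi_{r-1}^V$, so $\psi_{r-1}^V(t)\le \psi_{r-1}^V(0)\,e^{-\alpha_{r-1}^V t}\to 0$, hence $\limsup_{t\to\infty}\psi_{r-1}^V(t)\le 0$. Substituting this bound into $\dot\psi_{r-2}^V = -\alpha_{r-2}^V\psi_{r-2}^V + \psi_{r-1}^V$ and using variation of constants gives $\psi_{r-2}^V(t)\le e^{-\alpha_{r-2}^V t}\psi_{r-2}^V(0) + \psi_{r-1}^V(0)\!\int_0^t e^{-\alpha_{r-2}^V(t-\tau)}e^{-\alpha_{r-1}^V\tau}\,\de\tau$, and the right–hand side tends to $0$ (a constant times $e^{-\alpha t}$, or a $t\,e^{-\alpha t}$ term if the two gains coincide). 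An easy downward induction then yields $\limsup_{t\to\infty}\psi_i^V(t)\le 0$ for every $i$, and in particular $\limsup_{t\to\infty}V(x(t)) = \limsup_{t\to\infty}\psi_0^V(t)\le 0$. Since $V$ is nonnegative on $\cX$, $V(x(t))\ge 0$ for all $t$, and the two bounds force $\lim_{t\to\infty}V(x(t)) = 0$, which is the claim (with $V$ positive definite this additionally pins $x(t)$ to the zero level set of $V$).

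I expect the main obstacle to be the bookkeeping in the downward induction rather than anything conceptual: the comparison estimates must be kept one–sided in the correct direction, and one must check that successive convolutions of decaying exponentials remain of the form (polynomial)$\times$(exponential) so that the bound still vanishes when some gains $\alpha_i^V$ repeat. Two secondary points deserve care in the write-up: (i) the identification of $c_l$ with the coefficients of $\prod_{i=0}^{r-1}(s+\alpha_i^V)$ is what makes the cascade stable — merely assuming the entries of $c_l$ positive does not make the associated polynomial Hurwitz, so the proof should invoke the nested $\psi_i^V$ construction (or add the explicit Hurwitz hypothesis); and (ii) the argument presumes the closed–loop trajectory exists for all $t\ge 0$ and stays in $\cX$ so that \eqref{cons:clf} keeps applying along it — this is provided by the CBF constraints of Proposition~\ref{def:ECBF} together with $0\in\cU$, which I would cite rather than reprove here.
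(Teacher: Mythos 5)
The paper never actually proves this proposition: it is stated with a citation to the exponential-CLF literature, and no argument for it appears anywhere in the text, so there is nothing internal to compare your proof against. That said, your proof is correct and is essentially the standard argument from the cited line of work: identify the left-hand side of \eqref{cons:clf} with $\psi_r^V=\prod_{i=0}^{r-1}(\tfrac{\de}{\de t}+\alpha_i^V)V\leq 0$, read the nested $\psi_i^V$ as a triangular cascade of first-order differential inequalities, and push exponential decay from $\psi_{r-1}^V$ down to $\psi_0^V=V$ by repeated application of the comparison lemma, concluding $\lim_{t\to\infty}V(x(t))=0$ from $\limsup V\leq 0$ and $V\geq 0$. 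The downward induction is handled correctly, including the point that convolutions of decaying exponentials stay of the form (polynomial)$\times$(exponential) when gains repeat. Your two caveats are both genuine and worth recording, since the proposition as written glosses over them: a merely entrywise-positive $c_l$ does not make $s^r+c_l\transpose[s^{r-1},\dots,1]\transpose$ Hurwitz for $r\geq 3$ (e.g.\ $s^3+s^2+s+10$ fails the Routh--Hurwitz condition $ab>c$), so the conclusion really does require $c_l$ to arise from the nested construction with positive $\alpha_i^V$ exactly as $c_b$ does in \eqref{eq:eh}; and the argument needs the closed-loop trajectory to exist for all $t\geq 0$ and remain in $\cX$ so that \eqref{cons:clf} keeps applying, which must be supplied by the invariance guaranteed through Proposition~\ref{def:ECBF}. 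With those two hypotheses made explicit, your proof is complete.
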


\subsection{Convex Decomposition of the Environment}\label{sec:simplified-tree}
We start with a tree $\cT = (\cV,\cE)$ generated by the traditional \rrtstar{} algorithm \cite{karaman2011sampling}. Since the number of samples is finite, the generated tree is not optimal, although it has a large number of nodes. We simplify the tree to reduce the number of nodes (while keeping all the samples that are in collision with obstacles) by following the simplified-\rrtstar{} algorithm in \cite{bahreinian2021rrt} and denote it as $\cT=(\cV_s,\cE_s)$.

Note that as a consequence of the simplifying steps above, it is still possible to connect any sample that was discarded from the original \rrtstar{} to the simplified tree with a straight line, suggesting that the simplified tree will be a good road-map representation \cite{choset2005principles} of the free configuration space reachable from the root (up to the effective resolution given by the original sampling).
Given the simplified tree $\cT_s=(\cV_s,\cE_s)$, for each node $i\in \cV_s$ in the tree, we define a convex cell $\cX_{i}$ similar to \cite{bahreinian2021rrt} such that the boundaries of $\cX_{i}$ are defined as the bisectors hyper-plane between node $i$ and other nodes in the tree except node $j$ which node $j$ is the parent of node $i$.
The polyhedron $\cX_{i}$ is similar to a Voronoi region \cite{latombe2012robot}. Note that $\cX_{i}$ contains all the points closer to $i$ than other vertices in $\cT_s$ but also includes the parent~$j$.

We assume the environment $\cP\subset \real{n}$, is decomposed in a finite number of convex cells $\{\cX_{i}\}$, such that $\bigcup_{i} \cX_{i}=\cP$, and set $\cX_{i}$ is a polytope defined by linear inequality constraints of the form $A_{x}\transpose x\leq b_{x}$.

We aim to design a different linear feedback controller $u$ for each cell $\cX_{i}$. The feedback signal used by the controller will be based on linear relative measurements for a set of \emph{landmarks}.
\begin{definition}
  A landmark is a point $l\in \real{n}$ whose location is known and fixed in the environment.
\end{definition}
For each convex section $\cX_{i}$, we have a finite number of landmarks, and the landmarks can be any points in the environment.
\subsection{High-level planning}\label{planning}
We consider two overall objectives for the controller design:
\begin{lenumerate}{O}
\item \label{it:point-stabilization} Point stabilization: given the stabilization point (where $\dot{x}=0$) in the environment and starting from any point, we aim to converge to the stabilization point (e.g., Fig.~\ref{fig:O1}).
\item\label{it:patrolling} Patrolling: starting from any point, we aim to patrol the environment by converging to a path, and then traversing the same path (e.g., Fig.~ \ref{fig:O2}).
\end{lenumerate}

First, we decompose the environment into a set of convex cells by implementing the cell decomposition method in \cite{bahreinian2021rrt} using the sample-based \rrtstar{} method (See Section~\ref{sec:simplified-tree}). To specify the convergence objective for each controller $u$, we first abstract the cell decomposition of the environment into a graph $\cG=(\cV,\cE)$, where each vertex $i \in \cV$ represents a cell $\cX_i$ in the partition of $P$, and an edge $(i,j)\in \cE$ if and only if cells corresponding to $i$ and $j$ have a face in common.

In the case of the point stabilization objective \ref{it:point-stabilization}, the stabilization point is one of the graph's vertices. If the stabilization point is in the middle of the cell, we introduce new constraints to the problem such that it satisfies the point stabilization in the middle of the cell.

For each cell, we then select one \emph{exit edge} (a pointer) such that, when considered together, all such edges provide a solution in the abstract graph $\cG$ to the high-level objective. For instance, in the case of objective \ref{it:point-stabilization}, the exit edge of each cell will point in the direction of the shortest path toward the vertex of the stabilization point. In the case of objective \ref{it:patrolling}, following the exit edges will lead to a cyclic path in the graph.

To give an example, the polygonal environment in Fig.~\ref{g1} is converted to the connected graph in Fig.~\ref{g3} based on the cell decomposition of the environment in Fig.~\ref{g2}. Starting from the first node in Fig.~\ref{g3}, shown by the green point, we find the path from the start node to the equilibrium node indicated by the red point, through the path planning algorithms (e.g., using Dijkstra's algorithm). Regarding that path, we define the \textit{exit face} as the face of the convex section the path moves through, and based on that, we design the controller.
\begin{figure}[t!]
  \centering
  
  \subfloat[]{\label{g1}{\includegraphics[width=3cm]{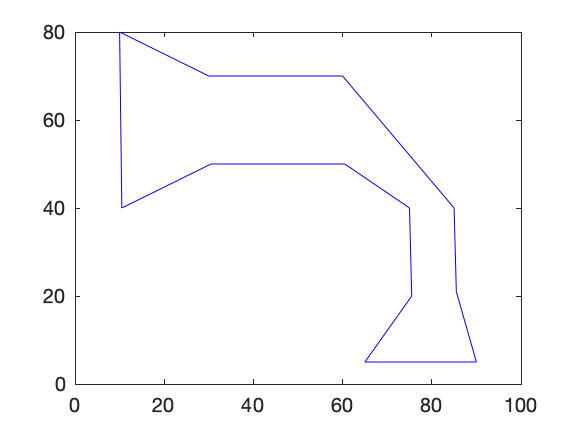} }}%
  \subfloat[]{{\label{g2}\includegraphics[width=3cm]{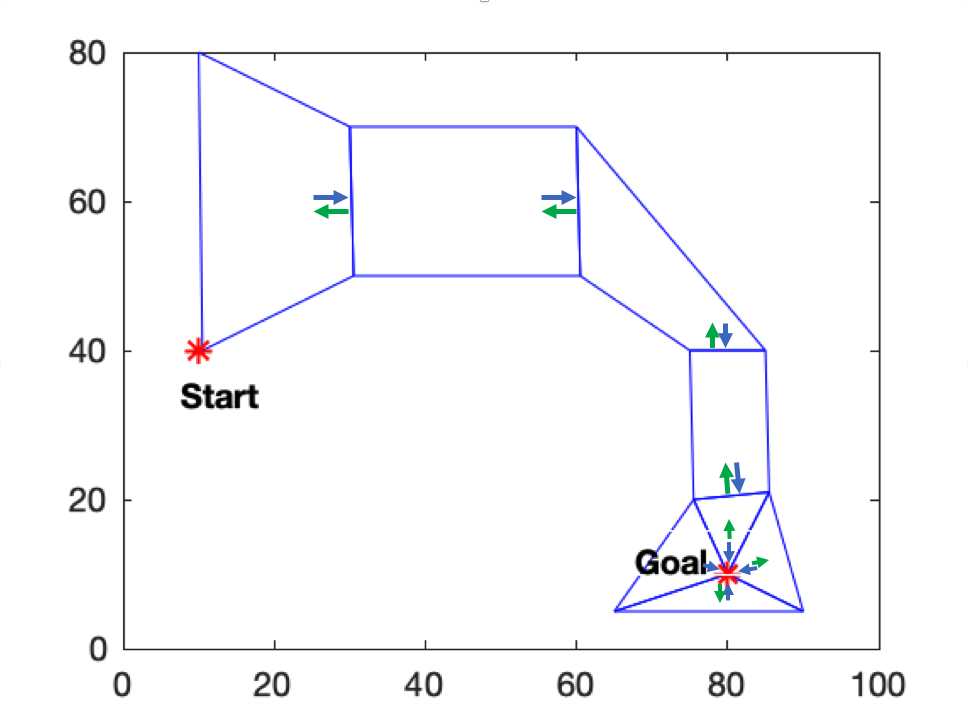} }}%
  \subfloat[]{{\label{g3}\includegraphics[width=3.1cm]{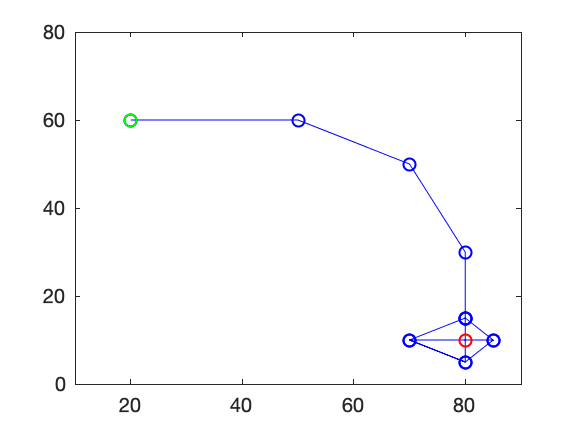} }}%
  \caption{The polygonal environment in Fig~.\ref{g1} is decomposed to 8 convex sections Fig~.\ref{g2}, blue arrows indicate the exit direction and green arrows indicate the inverse exit direction for each cell, the corresponding graph is shown in Fig~. \ref{g3}}
  \label{fig_env}
\end{figure}
\begin{definition}\label{exit_dir}
  For each cell $\cX_i$ in the decomposition of the environment, we define an \emph{exit face} $\cP_{exit}$ to be the face corresponding to the \emph{exit edge} in the abstract graph $\cG$. The \emph{inverse exit direction} $z$ is an inward-facing normal $\cP_{exit}$.
\end{definition}
In this work, we desire to design a controller for each convex section of the environment that drives the system in the exit direction toward the exit face or the stabilization point while avoiding the boundary of the environment.

Overall, thanks to the high-level planning in the abstract graph $\cG$, and the controller design in each cell $\cX_i$ (explained in the sections below), the system will traverse a sequence of cells to reach a given equilibrium point or achieve a periodic steady state behavior (examples in Section~\ref{sec:examples}) according to the desired objective.

\section{PROBLEM SETUP}\label{sec:problem setup}
This section aims to synthesize a robust controller for a convex cell $\cX$ (with respect to previous sections, we dropped the subscript $i$ to simplify the notation).
We assume that the robot can only measure the relative displacements between the robot's position $x_p$ and the landmarks in the environment, which corresponds to the output function
\begin{equation}\label{vec-landmarks}
  y=(L-x_p\vct{1}\transpose)^\vee=L^\vee-\cI P_px=\stack{(l_i- x_p)},
\end{equation}
where $L\in\real{n\times n_l}$ is a matrix of landmark locations, $i=1,\hdots,n_l$ that $n_l$ is the number of landmarks,  $A^\vee$ represents the vectorized version of a matrix $A$, $\cI=\vct{1}_{nl} \otimes I_n$, and $\otimes$ is the Kronecker product. Our goal is to find a feedback controller for the form
\begin{equation}\label{u}
  u(y,x_{d}) = K_py+K_d x_{d} + K_b
\end{equation}
where $K_p \in \real{m\times nn_l}, K_d \in \real{m\times n_d}$ and $K_b\in\real{m}$  are the feedback gains that need to be found for each cell $\cX$. The goal is to design $u(y,x_d)$ to drive the system toward the exit direction while avoiding obstacles. Note that, to define a controller for a cell, the landmarks do not necessarily need to belong to $\cX$, and, in general, each cell could use a different set of landmarks (see also Section~\ref{sec:limited-field-of-view}).

\begin{remark}
  In general, our framework can handle general linear output $y=Cx+D$, but we focus here on the path planning application.
\end{remark}
\subsection{Control Barrier Function}\label{sec_cbf}
We characterize the safe set $\cC_0$ with $n_h$ linear function $h_i$ as:
\begin{equation}\label{cbf_f}
  h_i(x)={A_{hi}}x+b_{hi}
\end{equation}
Where each $A_{hi}$ is derived from either $A_{dyn}$ or $A_x$ by negating a row while omitting the row corresponding to the exit face.

Note that \eqref{cons:cbf} only enforces the forward invariance if the initial condition lies within the $\bigcap_i^{r-1} \mathcal{C}_i$ Therefore, we assume that $\psi_i(x) \geq 0$ for all $0\leq i\leq r$  are initially met. This assumption leads to the following set of linear constraints:

\begin{equation}
    \psi_i = A_h A^i x + c_b\transpose[0:i] \bmat{A_h x+b_h\\ A_hA x\\ \vdots\\ A_h A^{i-1}x} \label{eq:hcbfsets} 
\end{equation}
 Where $c_b [0:i] \in \mathbb{R}^i$ is a subvector that contains the first $i$ elements of $c_b$. These constraints \eqref{eq:hcbfsets} adds new constraints on dynamic states. In order to incorporate them, we define the restricted dynamical set as $\cX_d = \{x|A_d x\leq b_d \} $ 

\subsection{Control Lyapunov Function}\label{sec_clf}
To stabilize the system, we define the Lyapunov function $V(x)$ for cell $\cX$ as,
\begin{equation}\label{clf_f}
  V(x)=z^T (x-x_e), \;\;
\end{equation}
where $z\in \real n$ is the inverse exit direction for the cell $\cX$ (see Definition~\ref{exit_dir}), and $x_e$ is an arbitrary point belong to the exit face $\cP_{exit}$ (i.e. $P_{d}x_e = 0$). Since this is a linear function that splits the space into two parts such that all $V(x) \geq 0$ for all $x \in \cX_i$  and the function reaches its minimum $V(x)=0$ when $x$ is in the exit face. Note that this Lyapunov function represents the distance $d(x,\cP_{exit})$ between the current system position and the exit face.

\begin{remark}
  The function $V(x)$ can be defined as a function of the vertices of the exit face instead of its normal.
 For instance, in $\real{2}$, we have
   \begin{equation}
    V(x)=\det(\bmat{v_1-v_0 & x-v_0})
  \end{equation}
  where $v_0$ and $v_1$ are two distinct points (e.g., vertices) in the exit face (with their order determining the correct sign in $V(x)$). Based on the same idea, in $\real{3}$, $V(x)=\det(\bmat{v_1-v_0 &v_2-v_0& x_p-v_0})$
  Where $v_0,v_1,v_2$ are three distinct points in the exit face (e.g., three vertices of the exit plane), respectively. This concept can be generalized to any dimension.
\end{remark}
\subsection{Finding the Controller by Robust Optimization}
Our goal is to find controllers $u$ (more precisely, control gains $(K_p, K_d, K_p)$) that maximize the movement of the robot toward the exit face while avoiding the boundary of the environment by satisfying the CLF and CBF constraints respectively.
Using the CLF-CBF constraints reviewed in Section~\ref{sec:notation}, we encode our goal in the following feasibility problem with a heuristic cost to make the transition between cells smoother :
\begin{equation}\label{opt-feasibility}
  \begin{aligned}
    & \min_{\{K_{pi},K_{di}, K_{bi}\}} \sum_{ij} \varphi_{ij}^t+\varphi_{ij}^p\\
    & \textrm{subject to:}\\
    &\bigl\{\text{CBF:}\;-( \cL^r_{A_x}h_{iq}+\cL^{r-1}_{Ax}\cL_Bh_{iq}u+c\transpose_b\xi_{h_{iq}})\bigr\}\leq 0,\\
    & \bigl\{\text{CLF:}\quad\cL_{Ax}^rV_{i}+\cL^{r-1}_{A_x}\cL_BV_{i}u+c_l\transpose\xi_{V_{i}}\bigr\}\leq 0,\\
    &u\in\cU,\;\;\forall x_{p}\in \cX_{i}, \forall x_{{d}} \in \cX_{d}, \forall q\in \mathbb{N}_{n_{hi}}. \forall i \in \mathbb{N}_N
  \end{aligned}
\end{equation}
where $(i,j)$ are two consecutive cells and $\mathbb{N}_i$ denotes the natural numbers less or equal than $i$.
Two terms of the objective function in \eqref{opt-feasibility} are defined as,
\begin{equation}\label{defineCost}
  \begin{aligned}
    & \varphi_{ij}^t= \sum_{v^\prime=1}^{n_{v_d}}\sum_{v =1}^{n_v} \abs{K_{1i}(y^v_{ij})+K_{2i} x_{\text{dyn}}^{v^\prime}+K_{3i}\\&-K_{1j}(y^v_{ij})-K_{2j} x_{\text{dyn}}^v-K_{3j} }\\ 
    & \varphi_{ij}^p=  \sum_{v^\prime=1}^{n_{v_d}}\sum_{v =1}^{n_v}\abs{BP_i(K_{1i}(y^v_{ij})+K_{2i}x_{\text{dyn}}^{v^\prime} +K_{3i})-(y^v_{ij}-y_m)}\\
    & y_{ij}^v = (L-x_{ij}^v)\\
    & y_m = \frac{1}{n_v}\sum_{v =1}^{n_v} y^v_{ij}
    \end{aligned}
\end{equation}
Where $x_{ij}^v$ are the vertices of the common face between two consecutive cells $\cX_i$ and $\cX_j$ and $x_{\text{dyn}}^{v^{\prime}}$ are the vertices of the set $\cX_d$. Basically, $\varphi_{ij}^t$ computes the difference between two controllers on all vertices of the common face and $\cX_d$ and minimizing $\varphi_{ij}^t$ provides a smooth transition between two cells. $\varphi_{ij}^p$ computes the difference between the projection of the controller on the exit face and the vector of the exit face, and minimizing $\varphi_{ij}^p$ results in a controller that drives the agent through the center of the exit face. $P_i \in \real{n\times n}$ defined as
\begin{equation}
     P_i = \cI_2-\eta_iz_i\transpose{z_i},
\end{equation}
where $0\leq\eta_i\leq1$ is a user-defined constant that adjusts the length of the projection of the controller in the exit face.

In practice, we aim to find a controller that satisfies the constraints in \eqref{opt-feasibility} with some margin. For the CLF constraints, we use $\delta_l$ margin to achieve finite-time convergence (see Proposition~\ref{delta_l}), and for the CBF constraints, we use the $\delta_b$ margin to achieve minimum distance from the obstacles (see Proposition~\ref{delta_b}). We focus on the following robust optimization problem:
\begin{equation}\label{opt_margin}
\centering
  \begin{aligned}
    & \min_{K_i,k_i,\delta_l,\delta_b} \sum_{ij} \varphi_{ij}^t+\varphi_{ij}^p+\omega_b \transpose \delta_b+\omega_l \delta_l\\
     & \textrm{subject to:}\\
     &\max_x \text{CBF}\leq \delta_b ,\\
    &\max_x \text{CLF}\leq \delta_l,\\
    & \delta_l,\delta_b \leq 0,\\
    &\forall x\in \cX, \forall x_{dyn} \in \cX_{dyn} \forall u \in \cU
  \end{aligned}
\end{equation}
where weights $\omega_{b}$ and $\omega_{l}$ are user-defined constants defining the trade-off between the barrier and Lyapunov function constraints.
Combining \eqref{vec-landmarks} and \eqref{u} with \eqref{sys1}, the constraints in \eqref{opt_margin} can be rewritten as\\
CBF constraints:
\begin{equation}\label{primal-cbf}
  \begin{aligned}
    &\begin{bmatrix}
      &\underset{x}{\max}\gamma_bx\\
      &\subjectto  A_{xi} x \leq b_{xi}\\
       & A_d x \leq b_d 
    \end{bmatrix}
    \leq R_b  
  \end{aligned}
\end{equation}
CLF constraint:
\begin{equation}\label{primal-clf}
  \begin{aligned}
    &\begin{bmatrix}\underset{x}{\max} \gamma_v x \\
      \subjectto\;\;A_{xi} x \leq b_{xi}\\
       A_d x \leq b_d    
    \end{bmatrix}
    \leq & R_v
  \end{aligned}
\end{equation}
where $\gamma_b$ , $\gamma_v, R_b, R_v$ and $R_u$ equals to:
\begin{equation}
\begin{aligned}
    &\gamma_b = -(A_{hi}A^r+A_{hi}A^{r-1}B (-K_{pi}\cI P_p+ K_{di} P_{d})-\\&{c_b}\transpose \bmat{A_{hi} & A_{hi} A &\hdots & A_{hi}A^{r-1}}) \\
    &\gamma_v = z_{i}\transpose A^{r}+z_{i}\transpose A^{r-1}B(-K_{pi}\cI P_p+K_{di} P_{d})+\\&{c_l}\transpose \bmat{ z\transpose & z\transpose A &\hdots & z\transpose A^{r-1}})\\
     & R_b =  \delta_{h_{i}}+[c_b]_0 b_{hi}+{A_{hi}}A^{r-1}B( K_{pi}L_{i}^\vee+K_{bi})\\
     &R_v =  \delta_{li}-z_{i}^TA^{r-1} B( K_{pi}L_{i}^\vee+K_{bi})+[c_l]_0z\transpose x_e\\
      &R_u = b_u-A_uK_1^i L^\vee-A_u K_{bi}\\
\end{aligned}
\end{equation}
and $[.]_i$ denotes the $i$th element of a vector.
Moreover, the control bounds can be captured as:
\begin{equation}
\label{u bound}
    \begin{aligned}
         &\begin{bmatrix}\underset{x}{\max} A_u( -K_{pi}\cI P_p+ K_{di} P_{d})x \\
      \subjectto\;\;A_{xi} x \leq b_{xi}\\
      A_d x \leq b_d
    \end{bmatrix}
    \leq R_u \\   
    \end{aligned}
\end{equation}
Constraints in \eqref{primal-cbf}, \eqref{primal-clf}  and \eqref{u bound} are linear in terms of variable $x$, so we can write dual forms of the constraints as
CBF dual constraint:
\begin{equation}\label{dual-conscbf}
  \begin{aligned}
    &\begin{bmatrix}
      &\min_{\lambda_{bi}} \lambda_{bi}\transpose b_{xi} \\
      &\subjectto\\
      & A_{xi} \transpose\lambda_{bi}+A_d\transpose \lambda_{bi}^\prime=\gamma_b\\
      & \lambda_{bi},  \lambda_{bi}^\prime \geq 0,
    \end{bmatrix} \leq R_b 
  \end{aligned}
\end{equation}
CLF dual constraint:
\begin{equation}\label{dual-consclf}
  \begin{aligned}
    &\begin{bmatrix}
      &\min_{\lambda_l}\lambda_{li}  \transpose b_{xi}\\
      &\subjectto\\
      &  A_{xi}\transpose \lambda_{li}+A_d\transpose \lambda_{li}^\prime
      =\gamma_v\\
      & \lambda_{li}, \lambda_{li}^\prime   \geq 0
    \end{bmatrix}\leq R_v 
  \end{aligned}
\end{equation}
Control bounds dual:
\begin{equation}\label{dual-u bound}
  \begin{aligned}
    &\begin{bmatrix}
      &\min_{\lambda_{ui}}\lambda_{ui}  \transpose b_{xi}\\
      &\subjectto\\
      &  A_{xi}\transpose \lambda_{ui}+A_d\transpose \lambda_{ui}^\prime
      = A_u( -K_{pi}\cI P_p+ K_{di} P_{d})\transpose\\
      & \lambda_{ui}, \lambda_{ui}^\prime \geq 0
    \end{bmatrix}\leq   R_u,
  \end{aligned}
\end{equation}
\begin{equation}\label{opt-dual}
  \begin{aligned}
    \min_{K_{pi},K_{di},K_{bi}, \delta_l,\delta_b} & \sum_{ij} \varphi_{ij}^t+\varphi_{ij}^p+\omega_b \transpose \delta_b+\omega_l \delta_l\\
    \text{subject to} \;\;
      &\min_{\lambda_b} \text{CBF dual constraint }\eqref{dual-conscbf} ,\\
      &\min_{\lambda_l}\text{CLF dual constraint }\eqref{dual-consclf},\\
      &\min_{\lambda_u}\text{Control bounds dual }\eqref{dual-u bound},\\
      &\delta_b,\delta_l \leq 0.
  \end{aligned}
\end{equation}

For the purpose of point stabilization objective\ref{it:point-stabilization}, if the stabilization point $x_e$ is located at the middle of the cell, then, instead of the CLF dual constraint \eqref{opt-dual}, we use the following constraint
\begin{equation}\label{clf_middle}
    K_{pi}(L_i-x_{e}\vct{1}\transpose)+K_{bi}=\vct{0}_2.
\end{equation}

Consequently, \eqref{opt_margin} with the dual constraints becomes:
\begin{equation}\label{opt-min}
  \begin{aligned}
    \min_{K_i,k_i,\delta_b,\lambda_b,\lambda_l}& \sum_{ij} \varphi_{ij}^t+\varphi_{ij}^p+\omega_b \transpose \delta_b\\
    \text{subject to} \;\;
      &\text{CBF dual constraint \eqref{dual-conscbf}},\\
      &\text{Control bounds dual }\eqref{dual-u bound},\\
      &\text{Constraint \eqref{clf_middle} },\\
      &\delta_b \leq 0.
  \end{aligned}
\end{equation}
In the following, we prove that the feasible optimal solution for \eqref{opt_margin} is also the feasible optimal solution for \eqref{opt-dual}.
\begin{remark}\label{strong_duality}
  By strong duality \cite[Theorem 4.4]{LP} if a linear programming problem has an optimal solution, so does its dual, and the respective optimal costs are equal.
\end{remark}
This remark allows us to prove the following.
\begin{lemma}\label{lem1}
  Optimization problems \eqref{opt_margin} and the optimization problem \eqref{opt-dual} have the same feasible optimal solution.
\end{lemma}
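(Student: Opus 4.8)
The plan is to apply linear-programming strong duality (Remark~\ref{strong_duality}) separately to each of the three families of semi-infinite constraints that appear in \eqref{opt_margin}: the CBF constraint \eqref{primal-cbf}, the CLF constraint \eqref{primal-clf}, and the input-bound constraint \eqref{u bound}. The key observation is that, once the controller gains $(K_{pi},K_{di},K_{bi})$ and the slacks $\delta_b,\delta_l$ are held fixed, each of these is a constraint of the form ``$\bigl(\max_x \gamma\transpose x \text{ s.t. } A_{xi}x\le b_{xi},\, A_d x\le b_d\bigr)\le R$'', where $\gamma$ and $R$ are affine functions of the (fixed) decision variables. The inner problem is an ordinary LP in the variable $x$.

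First I would argue that each inner LP is feasible and bounded: its feasible region is the polytope $\cX_i\cap\cX_d$, which is nonempty by construction of the cell decomposition (Section~\ref{sec:simplified-tree}) and compact because $\cX_i$ is a bounded convex cell; hence the maximum is attained and finite. Consequently Remark~\ref{strong_duality} applies: the dual LP --- which is precisely the inner minimization over $(\lambda_{bi},\lambda_{bi}^\prime)$, $(\lambda_{li},\lambda_{li}^\prime)$, or $(\lambda_{ui},\lambda_{ui}^\prime)$ written in \eqref{dual-conscbf}, \eqref{dual-consclf}, and \eqref{dual-u bound} --- is also feasible, attains its optimum, and has the same optimal value as the corresponding inner primal.

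Next I would use this value equality to show the outer feasible sets coincide. For any fixed $(K_{pi},K_{di},K_{bi},\delta_l,\delta_b)$, the constraint $\max_x \gamma_b\transpose x\le R_b$ holds if and only if the common optimal value of the inner primal/dual pair is $\le R_b$, which holds if and only if the inner minimum in \eqref{dual-conscbf} is $\le R_b$; the same equivalence holds for \eqref{dual-consclf} and \eqref{dual-u bound}. When the stabilization point lies in the interior of the cell, the CLF constraint is instead the linear equality \eqref{clf_middle}, which is already finite-dimensional and carries over verbatim, so no duality step is needed there. Since the objective $\sum_{ij}\varphi_{ij}^t+\varphi_{ij}^p+\omega_b\transpose\delta_b+\omega_l\delta_l$ is identical in \eqref{opt_margin} and \eqref{opt-dual} and involves neither $x$ nor the multipliers $\lambda$, the two programs have the same feasible set in the shared variables $(K_{pi},K_{di},K_{bi},\delta_l,\delta_b)$ and the same cost on it; hence they attain their optima at the same points with equal optimal value.

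I expect the main obstacle to be making the feasibility/boundedness argument for the inner LPs fully rigorous --- in particular, confirming that the restricted dynamic set $\cX_d$ arising from \eqref{eq:hcbfsets} keeps $\cX_i\cap\cX_d$ nonempty and bounded --- and ensuring the dual programs \eqref{dual-conscbf}--\eqref{dual-u bound} are transcribed with the correct right-hand-side terms (the dual objective should pair the multipliers with the full stacked bound vector built from $b_{xi}$ and $b_d$, not $b_{xi}$ alone), so that the value equality furnished by strong duality is literally the identity being claimed. The remaining steps are a direct, term-by-term application of Remark~\ref{strong_duality}.
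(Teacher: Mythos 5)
Your proposal is correct and follows essentially the same route as the paper's own proof: fix the decision variables, apply LP strong duality (Remark~\ref{strong_duality}) to each inner maximization in \eqref{primal-cbf}, \eqref{primal-clf}, \eqref{u bound} to replace it by the corresponding dual minimization, and conclude that the two outer programs share the same feasible set and identical objective. The paper states this in two sentences without checking that the inner LPs attain their optima (needed for the cited form of strong duality) or that the dual objectives should pair the multipliers $\lambda^\prime$ with $b_d$ as well as $\lambda$ with $b_{xi}$; your attention to both points is a genuine refinement of, not a departure from, the paper's argument.
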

\begin{proof}
  Two optimization problems have the same objective functions. Constraints in \eqref{opt_margin} are in the form of LP optimization problem, and the constraints \eqref{opt-dual} are the duals. According to the Remark \ref{strong_duality}, the optimal cost of constraints in \eqref{opt_margin} and \eqref{opt-dual} are equal and result in the same constraints with the same objective functions, which imply the optimization problem \eqref{opt_margin} and \eqref{opt-dual} have the same optimal solution.
\end{proof}
In section \ref{sec:stationary} we will proof that the solution of \eqref{opt-min} is a safe and stable controller for system \eqref{sys1}.


In the following two propositions, we study the physical meaning of the slack variables $\delta_l$ and $\delta_b$:
\begin{proposition}\label{delta_l}
For a system with a relative degree order equal to one, if the solution of \eqref{opt-min} results in an optimal $\delta_l$ that is strictly less than zero, then every trajectory exits each cell in finite time.
\end{proposition}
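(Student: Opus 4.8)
The plan is to use the CLF constraint with the margin $\delta_l$ to show that the Lyapunov function $V(x) = z^\transpose(x - x_e)$, which measures the distance $d(x,\cP_{exit})$ from the exit face, decreases at a rate bounded away from zero, so it must reach zero in finite time. First I would specialize the HCLF inequality \eqref{cons:clf}, which under the optimal controller reads $\cL_{Ax}^r V(x) + \cL_{Ax}^{r-1}\cL_B V(x)\,u + c_l^\transpose \xi_V(x) \le \delta_l$ for all $x \in \cX$, to the case $r = 1$. When the relative degree is one, $\xi_V(x) = V(x)$ and $c_l$ is the scalar $\alpha_0 > 0$, so the constraint becomes $\dot V(x) \le \delta_l - \alpha_0 V(x)$ along every trajectory that remains in the cell. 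Since $V(x) \ge 0$ on $\cX$ and $\delta_l < 0$ strictly, we get the cleaner bound $\dot V(x) \le \delta_l < 0$ for as long as $x(t) \in \cX$.

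Next I would integrate this differential inequality. Starting from $x(t_0) \in \cX$ with $V(x(t_0)) = V_0 \ge 0$, as long as the trajectory stays in $\cX$ we have $V(x(t)) \le V_0 + \delta_l (t - t_0)$. Because $\delta_l < 0$, the right-hand side becomes zero at $t = t_0 + V_0/(-\delta_l)$ and negative thereafter. But $V(x) = d(x,\cP_{exit}) \ge 0$ everywhere on $\cX$, and $V(x) = 0$ exactly on the exit face; hence the trajectory cannot remain strictly inside $\cX$ past time $t_0 + V_0/|\delta_l|$ — it must reach $V(x) = 0$, i.e., the exit face, no later than that time. (One can also invoke the exponential bound $V(x(t)) \le e^{-\alpha_0(t-t_0)}V_0 + \delta_l(1 - e^{-\alpha_0(t-t_0)})/\alpha_0$ obtained from the comparison lemma applied to $\dot V \le \delta_l - \alpha_0 V$, which likewise forces $V$ to hit zero in finite time since its asymptotic value $\delta_l/\alpha_0$ is negative.) Since $V_0 \le \max_{x \in \cX} z^\transpose(x - x_e)$ is bounded over the compact cell, the exit time is uniformly bounded, and in particular finite for every initial condition.

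The main obstacle is a subtlety rather than a hard calculation: the CLF inequality is only guaranteed on $\cX$, so a priori the argument only controls $V$ while the trajectory is inside the cell, and one must argue that "reaching $V = 0$" genuinely means "exiting the cell." This requires noting that the only face of $\cX$ on which $V$ vanishes is $\cP_{exit}$ (by construction of $V$ in Section \ref{sec_clf}, $V > 0$ on all of $\cX$ except the exit face), together with the CBF constraints from Section \ref{sec_cbf}, which keep the trajectory inside $\cX$ through every non-exit face. Thus the trajectory cannot leave $\cX$ through any other face, and since $V$ is forced to zero in finite time, it must cross the exit face in finite time. I would also remark that the CBF constraints guarantee $x(t)$ stays in $\cX$ up to the exit, so the differential inequality remains valid on the whole interval $[t_0, t_0 + V_0/|\delta_l|]$, closing the argument.
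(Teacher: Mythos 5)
Your proof is correct and follows essentially the same route as the paper's: specialize the CLF constraint at relative degree one to obtain $\dot V \le \delta_l - c_l V \le \delta_l < 0$, integrate to get $V(x(t)) \le V(x_0) + \delta_l t$, and conclude the exit time is at most $-d_{\max}/\delta_l$. Your additional observation --- that the CBF constraints are needed to keep the trajectory inside $\cX$ so the differential inequality stays valid until the exit face is reached --- is a legitimate subtlety the paper's proof leaves implicit, and addressing it strengthens the argument.
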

 \begin{proof}
    Define the maximum distance from the exit face as
    \begin{equation}
        d_{max}=\max_{x\in \cX}\{V(x)\}.
    \end{equation}
    For a first-order system, the CLF constraints in \eqref{opt-feasibility} imply
    \begin{equation}
        \dot{V}(x(t))\leq \delta_l-c_lV(x(t))\leq \delta_l
    \end{equation}
    where $\delta_l < 0$. 
Applying Gromwall's lemma, the differential inequality above implies $V(x(t))\leq V(x_0)+\delta_l t $. By definition, the robot is at the exit face when $V(x(t))=0$.
Hence, $t_{exit}\leq -\frac{d_{max}}{\delta_l}$ and the controller reaches the exit face in finite time if $-\frac{d_{max}}{\delta_l}$ has a finite value.
 \end{proof}


\begin{proposition}\label{delta_b}
  For each cell, define the set $\cX_{\delta_b}=\{x\in\cX: h(x)\geq -\delta_b\}\subset \cX$. 
  If the problem \eqref{opt-dual} or \eqref{opt-min} is feasible with an optimal $\delta_b$ less than zero and the system relative degree order equals to one, then $\cX_{\delta_b}$ is forward-invariant, and if the robot starts at a distance at least $\delta_b$ from the walls, then it will never get closer than $\delta_b$.
\end{proposition}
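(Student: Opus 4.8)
The plan is to mirror the proof of Proposition~\ref{delta_l}, trading the Lyapunov inequality for the barrier inequality and the finite-time bound for an invariance bound. First I would specialize everything to relative degree $r=1$: then $\xi_h=[\,h(x)\,]$, $c_b=[\alpha_0]$, and along any closed-loop trajectory $x(t)$ one has $\dot h(x(t))=\cL_{Ax}h(x(t))+\cL_Bh(x(t))\,u(t)$. By Lemma~\ref{lem1} the gains returned by \eqref{opt-min} (equivalently \eqref{opt-dual}) satisfy the robust CBF constraint \eqref{dual-conscbf} at every point of the cell with margin $\delta_b$; evaluating it at $x(t)$ yields $\dot h(x(t))+\alpha_0 h(x(t))\ge-\delta_b$, i.e.\ $\dot h(x(t))\ge-\alpha_0 h(x(t))-\delta_b$. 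For $r=1$ there are no lower-order constraints beyond $h$ itself, so $\cX_d$ reduces to the original dynamic set and no further bookkeeping is needed at this step.

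Next I would run a comparison-lemma argument exactly as in Proposition~\ref{delta_l}. Setting $y(t)=h(x(t))$, the inequality $\dot y\ge-\alpha_0 y-\delta_b$ together with the scalar linear ODE $\dot z=-\alpha_0 z-\delta_b$, $z(0)=y(0)$, gives $y(t)\ge z(t)=e^{-\alpha_0 t}y(0)+\frac{-\delta_b}{\alpha_0}\bigl(1-e^{-\alpha_0 t}\bigr)$, a convex combination of $y(0)$ and $-\delta_b/\alpha_0>0$ (recall $\delta_b<0$); hence $h(x(t))\ge\min\{h(x(0)),\,-\delta_b/\alpha_0\}$ while $x(t)$ stays in the cell. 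Equivalently, writing $\bar h(x)=h(x)+\delta_b$ so that $\cX_{\delta_b}=\{\bar h\ge0\}$, the constraint becomes $\dot{\bar h}\ge-\alpha_0\bar h+\delta_b(\alpha_0-1)$, which is nonnegative on $\partial\cX_{\delta_b}$ when $\alpha_0\le1$, and Nagumo's theorem / the zeroing-CBF result \cite{zcbf1} then gives forward invariance directly. Either way, if $x(0)\in\cX_{\delta_b}$, i.e.\ $h(x(0))\ge-\delta_b$, then $h(x(t))\ge-\delta_b$ for the whole time the trajectory is in the cell, so $\cX_{\delta_b}$ is forward-invariant. Since each $A_{hi}$ is a negated row of $A_x$ or $A_{dyn}$, $h$ is --- up to the norm of that row --- the signed distance to the corresponding wall, so this is exactly the statement that a robot starting at least $|\delta_b|$ from that wall never gets closer than $|\delta_b|$; running the argument for each barrier function of the cell with its own slack component gives the conclusion for all walls at once.

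The step I expect to need the most care is the bookkeeping around $\alpha_0$: the comparison bound only guarantees clearance $\min\{h(x_0),-\delta_b/\alpha_0\}$, so $\cX_{\delta_b}$ is forward-invariant for every initial condition in it precisely when $\alpha_0\le1$; if $\alpha_0>1$ the guaranteed clearance is the smaller $|\delta_b|/\alpha_0$ and the set that is literally invariant is $\{h\ge-\delta_b/\alpha_0\}\supseteq\cX_{\delta_b}$, so I would either carry $\alpha_0\le1$ as a standing assumption or state the weaker conclusion. The only other point to nail down is that the robust CBF inequality may legitimately be invoked at $x(t)$: this is immediate because \eqref{dual-conscbf} is imposed for all $x$ in the cell, while Proposition~\ref{delta_l} together with the CBF constraints on the remaining faces ensures the trajectory does not leave the cell except through the exit face, so the relevant time interval is precisely the robot's dwell time in the cell.
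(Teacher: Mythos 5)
Your proof is the paper's own argument: specialize to relative degree one, read the robust CBF constraint along the trajectory as $\dot h + c_b h \ge -\delta_b$, and apply the comparison lemma to the scalar ODE $\dot y + c_b y = -\delta_b$. In fact you are \emph{more} careful than the paper. The published proof writes the constraint as $\dot h + c_b h \ge \delta_b$ and concludes $h(x(t))\ge y(t)\ge \delta_b$, which (since $\delta_b<0$) is strictly weaker than the bound $h\ge-\delta_b$ needed for membership in $\cX_{\delta_b}$; it also ignores that the equilibrium of the comparison ODE is $-\delta_b/c_b$, not $-\delta_b$. Your observation that the guaranteed clearance is $\min\{h(x_0),\,-\delta_b/\alpha_0\}$ --- so that $\cX_{\delta_b}$ itself is forward invariant only when $\alpha_0\le 1$, and otherwise the literally invariant set is $\{h\ge -\delta_b/\alpha_0\}$ --- is a genuine correction to the proposition as stated, not merely bookkeeping; carrying $\alpha_0\le1$ as a standing assumption (or restating the invariant set with the factor $1/c_b$) is the right fix.
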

\begin{proof}
  The proof follows the original CBF proof. The constraint ensures $\dot{h}+c_bh\geq \delta_b$. Let $y(t)$ be the solution of $\dot{y}+c_by=\delta_b$ with $y(0)=h(x(0))$. $y(0)\geq \delta_b$ by assumption that $x(0)\in\cX_{\delta_b}$. The explicit expression for $y(t)$ is $y(t)=\frac{\delta_b}{c_b}+ce^{-c_bt}$ where $c$ is a constant. Then by the comparison lemma \cite{khalil2002nonlinear}, $h(x(t))\geq y(t)\geq \delta_b$. Hence $x(t)\in \cX_{\delta_b}$ for all $t\geq 0$ and it is forward invariant.
\end{proof}

\subsection{Stationary Point}\label{sec:stationary}
 This section demonstrates the stability of our controller derived from either \eqref{opt-dual} or \eqref{opt-min}, and establishes that $x_e$ is the equilibrium. Consider the stabilization objective \ref{it:point-stabilization} defined in Section \ref{planning}, and let $x_e$ be the stabilization point in $\cX$. In this section, we provide sufficient conditions showing that the controllers synthesized with our proposed method introduce an equilibrium point at $x_e$.

The stabilization objective \ref{it:point-stabilization} is divided into two cases: first, when the stabilization point is located at the corner of the cell, and second, when the stabilization point is located at the middle of the cell. The distinction is due to the fact that the two cases rely on very different theoretical tools.

\subsubsection{Stabilization to a Vertex}
Before proceeding, we need the following. We use $\stack()$ to denote the operator that stacks vertically all its matrix arguments.
\begin{fact}\label{fact:h-cone}
  Let $A_{h,exit}$ be the matrix whose rows are the row vectors in the set $\{A_{h,i}: h_i(x_e)=0\}$. Then $z$ belongs to the proper cone $\{v: A_{h,exit}v\geq 0\}$.
\end{fact}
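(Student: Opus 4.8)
The plan is to translate the problem to the vertex $x_e$, reduce the claim to the local geometry of $\cX$ there, and then read off the inequality $A_{h,exit}z\ge 0$ from the definition of the exit direction. First I would shift coordinates so that $x_e$ is the origin; every barrier $h_i$ active at $x_e$ then has $b_{hi}=0$, so $h_i(x)=A_{h,i}x$, and the cone in the statement is $\cK:=\{v : A_{h,i}v\ge 0 \text{ for all } i \text{ with } h_i(x_e)=0\}$. By the construction of the barriers in Section~\ref{sec_cbf}, the active rows $A_{h,i}$ are the negated outward normals of exactly those facets of $\cX$ (and active faces of the restricted set $\cX_d$) that pass through $x_e$, the single exception being the exit facet $\cP_{exit}$, which is omitted by construction. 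Hence $\cK$ is the tangent cone of $\cX$ at $x_e$ with one supporting half-space deleted. Since $x_e$ is a vertex of the bounded polytope $\cX$, that tangent cone is a full-dimensional pointed polyhedral cone; deleting a single half-space keeps $\cK$ closed, convex and full-dimensional, and one checks from the list of constraints active at $x_e$ that it remains pointed, so $\cK$ is proper. It then remains to prove $z\in\cK$, i.e. $A_{h,i}z\ge 0$ for every barrier active at $x_e$.

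Next I would recast this geometrically. Writing $A_{h,i}=-a_i\transpose$ with $a_i$ the outward unit normal of the $i$-th active facet, $A_{h,i}z\ge 0$ is equivalent to $a_i\transpose z\le 0$, i.e. to stepping from $x_e$ along the inverse exit direction not crossing that facet. Collecting these, and noting that the deleted half-space (the exit facet) is automatically satisfied since $z$ is by Definition~\ref{exit_dir} an inward normal of $\cP_{exit}$, while the inactive constraints hold with strict slack at $x_e$, the Fact becomes the purely geometric claim that $x_e+\varepsilon z\in\cX$ for all sufficiently small $\varepsilon>0$ --- the inverse exit direction points into the cell at $x_e$.

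To establish that, I would combine the defining property of the Lyapunov function $V(x)=z\transpose(x-x_e)$ with convexity of $\cX$. We know $V\ge 0$ on $\cX$ and $V$ is minimized along $\cP_{exit}$; in the vertex case $x_e$ must in fact be the unique minimizer of $V$ over $\cX$ (otherwise it could not be the equilibrium), which forces $z$ into the interior of the cone generated by the inward normals $\{-a_k\}$ of the facets of $\cX$ meeting at $x_e$ --- say $z=\sum_k\beta_k(-a_k)$ with $\beta_k>0$. Substituting, $A_{h,i}z=-a_i\transpose z=\sum_k\beta_k\,a_i\transpose a_k=\beta_i+\sum_{k\ne i}\beta_k\,a_i\transpose a_k$; in the plane at most one other facet meets $x_e$, and with equal weights this equals $\beta(1+\cos\theta_i)\ge 0$, where $\theta_i\in[0,\pi)$ is the angle between the two outward facet normals at $x_e$. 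Since this is nonnegative at any genuine vertex, $A_{h,exit}z\ge 0$ componentwise, which is the Fact.

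The step I expect to be the real obstacle is making this sign argument watertight beyond two dimensions: for a general convex cell the cross terms $a_i\transpose a_k$ with $k\ne i$ can be negative, so $z$ being merely a positive combination of inward normals does not by itself force $A_{h,i}z\ge 0$. One then has to use extra structure --- the specific selection of the exit face and the stabilization vertex (Definition~\ref{exit_dir} and Section~\ref{planning}), or a mild non-acuteness property of the cells guaranteeing that facets meeting at $x_e$ are pairwise non-acute --- and verify that the \rrtstar{}-based decomposition of Section~\ref{sec:simplified-tree} actually supplies it. Identifying and discharging exactly that hypothesis is where the proof needs the most care.
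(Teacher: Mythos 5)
You should know at the outset that the paper does not actually prove this Fact: it is asserted with a single sentence of intuition (``$A_{h,\textrm{exit}}$ represents the normal of the active constraints at the stabilization point, and $z$ needs to be inward-pointing''), so there is no argument in the paper for your proposal to be measured against. Your reduction is nevertheless the right one. After translating $x_e$ to the origin, each active row $A_{h,i}$ is, by the construction in Section~\ref{sec_cbf}, the negated outward normal $-a_i$ of a non-exit facet through $x_e$, and $z$ is, by Definition~\ref{exit_dir}, exactly the negated outward normal $-a_{\textrm{exit}}$ of the exit face. Hence $A_{h,i}z=a_i\transpose a_{\textrm{exit}}$, and the Fact is precisely the family of inequalities $a_i\transpose a_{\textrm{exit}}\geq 0$: the exit facet must meet every other facet active at $x_e$ at a non-acute interior angle.

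The genuine gap in your argument is the step that places $z$ in the interior of the cone generated by the inward normals of \emph{all} facets meeting at $x_e$. That step is unsupported: $V(x)=z\transpose(x-x_e)$ vanishes on the entire exit face, so $x_e$ is never the unique minimizer of $V$ over $\cX$, and first-order optimality gives nothing beyond what Definition~\ref{exit_dir} already states, namely $z=-a_{\textrm{exit}}$ with no contribution from the other facets. Substituting the correct $z$ into your own computation yields $A_{h,i}z=a_i\transpose a_{\textrm{exit}}=\cos\theta_i$ up to positive scaling, not $\beta(1+\cos\theta_i)$, and this is negative exactly when the interior angle at $x_e$ is acute. So the failure mode you relegate to ``beyond two dimensions'' already occurs in the plane, e.g.\ for a triangular cell with a sharp corner at $x_e$. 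Your closing diagnosis is therefore the real content here and is correct: the Fact is not unconditional, it requires a non-acuteness hypothesis on the cell at $x_e$, and neither the Voronoi-like construction of Section~\ref{sec:simplified-tree} nor the paper's one-sentence justification supplies or verifies it. (The ``proper cone'' qualifier suffers the same way: with a single active non-exit facet, $\{v: A_{h,\textrm{exit}}v\geq 0\}$ is a half-space and is not pointed.)
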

This fact is intuitively given Definition~\ref{exit_dir}: $A_{h,\textrm{exit}}$ represents the normal of the active constraints at the stabilization point, and $z$ needs to be inward-pointing. Note that the rows or $A_{h,\textrm{exit}}$ are a subset of the rows of $A_{x}$. We can now state the main result of this section.

\begin{proposition}\label{goalPoint}
  Assume the pair $(A,\stack(A_{h,\textrm{exit}},z\transpose))$ is observable and that all $h_i$ and $V$ have the same relative degree $r$ and $x_e$ is one of the corners of the cell $\cX$. Then, any solution to the optimization problem \eqref{opt-feasibility} (or, equivalently, the linear program \eqref{opt-min}) guarantees that $\dot{x}=0$ when $x=x_e$.
\end{proposition}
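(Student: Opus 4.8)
The plan is to turn every ingredient into affine data and then combine a Lyapunov ``squeeze'' along the trajectory issued from $x_e$ with the observability hypothesis. First I would record that, after substituting the landmark output \eqref{vec-landmarks} into \eqref{u}, the feedback is an affine function of the state, $u(x)=-K_{p}\cI P_p x+K_{d}P_d x+(K_{p}L^\vee+K_{b})$, so the closed loop \eqref{sys1} is affine, $\dot x=\tilde A x+\tilde b$ with constant $\tilde A,\tilde b$. Since $V$ in \eqref{clf_f} and each $h_i$ in \eqref{cbf_f} are affine, all the Lie-derivative expressions in the CLF constraint \eqref{cons:clf}, the CBF constraint \eqref{cons:cbf}, and the auxiliary conditions $\psi_k\ge 0$ encoded in $\cX_d$, are affine in $x$, hence hold at every admissible $x$ — in particular at $x_e$, where $V(x_e)=0$ and $h_i(x_e)=0$ for every $h_i$ whose normal is a row of $A_{h,\textrm{exit}}$.

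Next I would study the closed-loop solution $x(\cdot)$ with $x(0)=x_e$. Because $x_e\in\cX$, the HCBF forward invariance of Proposition~\ref{def:ECBF} forbids $x(t)$ from crossing any $h_i$-face; and the HCLF constraint, together with $V\ge 0$ on $\cX$, $V(x_e)=0$, and the fact that the constraint makes $V$ nonincreasing along any trajectory that stays in $\cX$, forces $V(x(t))\equiv 0$, i.e. $x(t)$ stays on $\cP_{exit}$. Because $x_e$ is a \emph{corner}, the rows of $A_{h,\textrm{exit}}$ together with $z\transpose$ pin down the position of the state, and — using Fact~\ref{fact:h-cone}, which places $z$ in the cone $\{v:A_{h,\textrm{exit}}v\ge 0\}$, together with the $\psi_k\ge0$ conditions baked into $\cX_d$ — I would argue that the active $h_i(x(t))$ also stay at $0$, so that the stacked output $\stack(A_{h,\textrm{exit}},z\transpose)\bigl(x(t)-x_e\bigr)$ and all of its time derivatives vanish at $t=0$.

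Finally, writing $w=\dot x(0)=Ax_e+Bu(x_e)$, and using the common relative degree $r$: the derivatives of order $\le r-1$ of that output do not involve $u$ and translate into relations $\stack(A_{h,\textrm{exit}},z\transpose)A^{j}x_e=0$; the order-$r$ derivative gives $\stack(A_{h,\textrm{exit}},z\transpose)A^{r-1}w=0$; and propagating the remaining derivatives through the linear closed loop yields $\stack(A_{h,\textrm{exit}},z\transpose)A^{j}w=0$ for all $j\ge0$. Thus $w$ lies in the unobservable subspace of the pair $\bigl(A,\stack(A_{h,\textrm{exit}},z\transpose)\bigr)$, which is $\{0\}$ by the observability assumption, so $w=0$; that is, $\dot x=0$ at $x=x_e$, making $x_e$ an equilibrium.

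I expect the main obstacle to be the middle step: rigorously showing that the solution starting from the boundary vertex $x_e$ stays frozen, i.e. that $V$ and every active $h_i$ remain identically zero. For $V$ this is the clean squeeze (nonnegative $+$ nonincreasing $+$ zero at $t=0$ $\Rightarrow$ all derivatives zero at $t=0$); for the barrier functions it is not automatic and needs the vertex structure, Fact~\ref{fact:h-cone}, and the intermediate constraints $\psi_k\ge0$ built into $\cX_d$. Carrying this through for general relative degree $r$ (rather than the transparent case $r=1$) and keeping track of the position/dynamics splitting of the state is where most of the bookkeeping will lie.
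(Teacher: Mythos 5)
Your strategy---integrate the closed loop from $x_e$, argue that the output $\stack(A_{h,\textrm{exit}},z\transpose)(x(t)-x_e)$ stays frozen, then invoke observability---is genuinely different from the paper's, and the step you flag as the ``main obstacle'' is not bookkeeping: it is the entire content of the proposition, and your sketch of it does not go through. The paper never integrates the dynamics. It argues pointwise at $x_e$, by induction on the order $j$: since the constraints in \eqref{opt-feasibility} are enforced for \emph{every} $x\in\cX$, they hold at $x_e$; under the induction hypothesis $A^ix_e=0$ for $i\le j-1$ the lower-order terms drop out, leaving the two opposite one-sided inequalities $z\transpose A^jx_e\le 0$ (from the CLF) and $A_{h,\textrm{exit}}A^jx_e\ge 0$ (from the CBFs active at the vertex); Fact~\ref{fact:h-cone}---the proper cone $\{v:A_{h,\textrm{exit}}v\ge 0\}$ meets the half-space $\{v:z\transpose v\le 0\}$ only at $v=0$---then forces $A^jx_e=0$. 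The same squeeze at order $r$, where $u$ first appears, pins $A^{r-1}\dot x(x_e)$ and hence $\dot x(x_e)$ to zero. This pointwise cone argument is exactly what replaces your frozen-trajectory claim. Your version of that claim fails on its own terms: (i) $x_e$ lies on the exit face and on the boundary of $\cX$, so the trajectory may leave $\cX$ immediately, after which $V\ge 0$ and the constraints no longer apply and the ``nonnegative $+$ nonincreasing'' squeeze breaks; (ii) for $r>1$ the HCLF condition bounds $\psi_r$, not $\dot V$, so monotonicity of $V$ is not available; (iii) for the barriers you concede the claim is not automatic, and no mechanism is supplied.

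There is a second concrete flaw at the end. If all output derivatives vanish at $t=0$, what you actually get is $C\tilde A^{k}w=0$ for the \emph{closed-loop} matrix $\tilde A=A+B(-K_p\cI P_p+K_dP_d)$, with $C=\stack(A_{h,\textrm{exit}},z\transpose)$ and $w=\dot x(0)$; the relative-degree structure lets you replace $\tilde A^{k}$ by $A^{k}$ only up to the order set by $r$, and state feedback does not preserve observability, so $w$ need not lie in the unobservable subspace of $(A,C)$ as assumed in the proposition. Note also that if you had truly shown the trajectory frozen on all faces active at the corner, their intersection is the single point $x_e$, so $x(t)\equiv x_e$ and $\dot x(0)=0$ would follow with no observability argument at all---a sign that the load-bearing step sits elsewhere. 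The repair is the paper's route: evaluate both families of constraints simultaneously at the single point $x_e$ and let the vertex geometry of Fact~\ref{fact:h-cone} convert the paired one-sided inequalities into equalities at every order.
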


Note that the assumption about having a homogeneous relative degree is reasonable since $z\transpose$ and $A_{h,\textrm{exit}}$ all essentially represent generic planes in the environment.

\begin{proof}
We prove this by induction. Suppose that $A^i x_e = 0$ for all $i\leq j-1$. We want to prove that $A^{j}x_e = 0$.  As discussed above, we have $V(x_e)=0$ for the Lyapunov function, and $h_i(x_e)=0$ for the constraints corresponding to $A_{h,\textrm{exit}}$. We have that $\psi^v_j(x) = \cL_A^j V + \sum_{i =1}^{j-1} \alpha_i \cL_A ^i V + \alpha_0 V= z\transpose A^j x + \sum_{i =1}^{j-1} \alpha_i A^i x +  \alpha_0 V \leq 0$. Using the induction hypothesis that $A^i x_e = 0$ for all $i \leq j-1$, we simplify $\psi^v_j(x_e)\leq 0$ to $\ z\transpose A^j x_e \leq 0$. A similar argument holds for CBF, and we conclude that $A_{h,\textrm{exit}}A^j x_e \geq 0 $. From Fact~\ref{fact:h-cone}, we have that the sets described by $A_{h,\textrm{exit}}v\geq 0$ and $z\transpose v\leq 0$ intersect only at the point $v=0$; hence, $A^j x_e=0$.Therefore we can conclude that $z\transpose A^{r-1}\dot{x}(xe) = z\transpose A^{r-1}(Ax_e+Bu(x_e)) \leq 0$ and $A_h\transpose A^{r-1}\dot{x}(x_e) = A_h \transpose (Ax_e +Bu(x_e)\geq 0$. Thus, based on the Fact\ref{fact:h-cone}, $\dot{x}(x_e)$ needs to be zero. 

\end{proof}
Intuitively, the proof shows that the CLF and CBF constraints fix $x$ to $x_e$.
\subsubsection{Stabilization to an Inner Point}
In this section, we provide conditions on the feedback control matrix $K$ that are sufficient to imply asymptotic convergence of a given point $x_{eq}$. We first state our results for a generic linear system with closed-loop dynamics
\begin{equation}
  \label{eq:dynamics-simplified}
  \dot{x}=A_Kx,
\end{equation}
and then apply the general result to our case defined by the dynamics~\eqref{sys1} with the output feedback control~\eqref{vec-landmarks},~\eqref{u}.
Without loss of generality, we assume that the equilibrium is at the origin, $x_{eq}=0$ (if not, the same discussion holds after a translation of the coordinate system). We claim the following:
\begin{proposition}\label{prop:inward}
  If there exist a bounded polyhedron $\cX$ such that $0\in\cX$ and the field $\dot{x}=A_Kx$ is \emph{inward-pointing} on $\cX$ (defined as a polytope as in~\eqref{state_limits}), then $x_{eq}=0$ is an asymptotically stable equilibrium of the system $\dot{x}=A_Kx$.
\end{proposition}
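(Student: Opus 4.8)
The plan is to use the polytope $\cX$ itself to manufacture a (nonsmooth) Lyapunov function for the closed-loop system $\dot x = A_K x$, namely its Minkowski gauge. I would first reduce to the case $0 \in \mathrm{int}\,\cX$, which is the situation relevant to us since for the inner-point stabilization objective $x_{eq}=x_e$ is a genuine interior point of the cell. Writing $\cX=\{x : A_x x \le b_x\}$, interiority gives $b_x>0$ entrywise, and we set
\[
 V(x) \;=\; \max_i \frac{(A_x x)_i}{(b_x)_i}.
\]
This is the gauge of $\cX$: it is convex and piecewise linear, positively homogeneous of degree one, and because $\cX$ is bounded with $0$ in its interior (so its recession cone is $\{0\}$) it is positive definite and radially unbounded, with sublevel sets $\{x : V(x)\le\lambda\} = \lambda\,\cX$. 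In particular $\{V \le 1\}=\cX$, so showing that $V$ is nonincreasing along trajectories will also re-establish forward invariance of $\cX$ (Nagumo's theorem specialized to this case).

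Next I would differentiate $V$ along a solution $x(\cdot)$. Since $V$ is a finite maximum of linear functionals, $t\mapsto V(x(t))$ is locally Lipschitz and, by the standard rule for differentiating a pointwise maximum (Danskin), its upper right Dini derivative is
\[
 D^+V(x(t)) \;=\; \max_{i\in I(x(t))} \frac{(A_x A_K x(t))_i}{(b_x)_i},
\]
where $I(x)$ denotes the set of constraints active at $x$ — equivalently, if $V(x)=\lambda>0$ then $x\in\partial(\lambda\cX)$ and $I(x)$ indexes the faces of $\lambda\cX$ containing $x$. Using positive homogeneity, write $x=\lambda\tilde x$ with $\tilde x = x/\lambda\in\partial\cX$ and note $I(x)=I(\tilde x)$; then $D^+V(x(t)) = \lambda \max_{i\in I(\tilde x)} (A_x A_K \tilde x)_i/(b_x)_i$, so the decay rate is controlled entirely by the behaviour of the field on $\partial\cX$.

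The hypothesis enters here. The statement that $\dot x = A_K x$ is \emph{inward-pointing} on $\cX$ means exactly that at each boundary point $\tilde x\in\partial\cX$ one has $(A_x A_K\tilde x)_i \le 0$ for every active face $i\in I(\tilde x)$; I would work with the strict form $(A_x A_K\tilde x)_i<0$, and remark that the merely non-strict condition gives only forward invariance of $\cX$ and Lyapunov stability, which one would then promote to asymptotic stability through LaSalle's invariance principle together with the observability hypothesis used in Proposition~\ref{goalPoint}. By continuity of $x\mapsto A_x A_K x$ and compactness of $\partial\cX$, there is $c>0$ with $\max_{i\in I(\tilde x)}(A_x A_K\tilde x)_i/(b_x)_i \le -c$ for every $\tilde x\in\partial\cX$; combined with the homogeneity reduction this yields $D^+V(x(t)) \le -c\,V(x(t))$ for all $t\ge0$. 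The comparison lemma then gives $V(x(t))\le e^{-ct}V(x(0))\to0$, hence $x(t)\to0$, and together with the positive definiteness and radial unboundedness of $V$ this shows that $x_{eq}=0$ is asymptotically (indeed exponentially, and globally) stable. Applying this to the $A_K$ induced by \eqref{sys1} with the output feedback \eqref{vec-landmarks},~\eqref{u} gives the claim.

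The main obstacle is the nonsmoothness of $V$: the gauge is only piecewise linear, so the derivative along trajectories must be handled with Dini derivatives / Clarke's generalized gradient rather than a classical chain rule, and one has to verify that the ``active-face'' description of the subdifferential is the correct one and that the decay constant $c$ is obtained uniformly from compactness of $\partial\cX$ rather than merely pointwise. The secondary delicate point is the precise reading of ``inward-pointing''; it is worth stating explicitly whether the strict or the non-strict version is intended, since only the strict one makes the single-Lyapunov-function argument close without an additional invariance-principle step.
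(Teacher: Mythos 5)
Your proof is correct, but it takes a genuinely different route from the paper. The paper argues by contradiction: it first records (Fact~\ref{fact:linear systems}) that a linear system can only be globally asymptotically stable, converge to a nontrivial subspace, converge to a periodic orbit, or be unstable, and then rules out the last three behaviors by exhibiting, in each case, a boundary point of $\cX$ (or of a scaled copy $\cX_s$, via Lemma~\ref{lemma:scaling}) at which the field cannot be inward-pointing --- the periodic-orbit case being handled by an intermediate-value argument that finds a scale $s$ at which $\partial\cX_s$ is tangent to the orbit. You instead build an explicit nonsmooth Lyapunov function, the Minkowski gauge $V(x)=\max_i (A_xx)_i/(b_x)_i$ of the polytope, and show $D^+V\leq -cV$ using the Danskin rule for the Dini derivative of a pointwise maximum together with positive homogeneity (your homogeneity reduction is exactly the content of the paper's Lemma~\ref{lemma:scaling}). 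Your route costs you some nonsmooth-analysis bookkeeping, but it buys a quantitative exponential rate $c$ expressible in terms of $\delta_l$ and $b_x$, avoids both the eigenvalue classification and the slightly delicate tangency argument for periodic orbits, and re-derives forward invariance of $\cX$ as a byproduct. Two small remarks: your worry about strict versus non-strict inward-pointing is resolved by the paper's definition, which requires a uniform strictly negative $\delta_l$, so the uniform constant $c$ is immediate without a separate compactness argument; and your reduction to $0\in\mathrm{int}\,\cX$ is in fact forced by the hypothesis, since $A_K\cdot 0=0$ cannot satisfy $A_{xi}A_Kx_0\leq\delta_l<0$ at an active constraint, so $0$ cannot lie on $\partial\cX$.
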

The important point of this section is that these convergence conditions on $A_K$ are \emph{linear}, and hence can be easily incorporated into an LP or QP; this is in contrast to standard criteria for stability such as Lyapunov-based conditions ($PA_K+A_K\transpose P = -Q$ for positive definite matrices $P,Q$, which result in a Semi-Definite Programming problem) or algebraic conditions ($A_K$ is Hurwitz \cite{asner1970total}). For completeness, we formally define the notion of \emph{inward-pointing} as follows.
\begin{definition}
  Let $A_{xi},b_{xi}$ denote the $i$-th row of $A_x$ and the $i$-th element of $b_x$, respectively. The vector field $\dot{x}=A_Kx$ is said to be inward-pointing at a point $x_0\in\partial\cX$ if 
  \begin{equation}\label{eq:inward point constraint}
      A_{xi}A_Kx_0\leq \delta_l
  \end{equation}
  for all $i$ such that $A_{xi}x_0=b_i$ and with $\delta_l<0$ strictly negative.
\end{definition}
\begin{definition}\label{def:inward set}
  The vector field $\dot{x}=A_Kx$ is inward-pointing on $\cX$ if it is inward-pointing for every $x\in\partial\cX$ with a common $\delta_l$.
\end{definition}
Note that, by Nagumo’s theorem \cite{nagumo1942lage}, Definition~\ref{def:inward set} implies that $\cX$ is forward invariant.

In order to prove Proposition~\ref{prop:inward}, we first need to review the following fact from linear dynamical system theory:
\begin{fact}\label{fact:linear systems} Assume that $x_{eq}=0$ is an equilibrium of the linear dynamical system $\dot{x}=A_Kx$. Then, the system will exhibit one of the following behaviors:
\begin{enumerate}
\item\label{it:gas} Globally asymptotically stable: all eigenvalues have a negative real part.
\item The system is stable, and all trajectories converge to a linear subspace $\cS_{x_{eq}}$ containing $x_{eq}$: some eigenvalues are zero ($A_K$ is singular), and the other have a negative real part.
\item The system converges to a bounded periodic orbit $\cO$ that does not contain $x_{eq}$: some eigenvalues are complex conjugate with a zero real part, and the others have a negative real part.
\item The system is unstable: at least one eigenvalue has a positive real part.
\end{enumerate}
\end{fact}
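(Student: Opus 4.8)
The plan is to reduce the claim to the classical spectral analysis of the matrix exponential. Since the dynamics are linear and autonomous, every solution is $x(t)=e^{A_K t}x_0$, and the asymptotic behavior of $x(t)$ is completely determined by the spectrum of $A_K$ together with its Jordan structure. I would first put $A_K$ in (real) Jordan canonical form, so that $e^{A_K t}$ block-decomposes into blocks whose entries are finite sums of terms of the form $t^k e^{\lambda t}$ (equivalently $t^k e^{\sigma t}\cos(\omega t)$ and $t^k e^{\sigma t}\sin(\omega t)$ with $\lambda=\sigma+i\omega$), where $\lambda$ ranges over the eigenvalues and $k$ is bounded by the size of the corresponding Jordan block. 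The real part $\sigma=\Re(\lambda)$ governs growth/decay and $\omega=\Im(\lambda)$ governs oscillation; the entire classification then follows by reading off the dominant term.

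Next I would split on $\mu := \max_i \Re(\lambda_i)$. If $\mu<0$, every entry of $e^{A_K t}$ is dominated by some $t^k e^{\sigma t}\to 0$, so $\norm{x(t)}\to 0$ for every $x_0$; this is case~1 (global asymptotic stability). If $\mu>0$, choosing $x_0$ to be the real (or imaginary) part of an eigenvector associated with an eigenvalue of real part $\mu$ yields $\norm{x(t)}$ growing like $e^{\mu t}\to\infty$, so the equilibrium is unstable; this is case~4.

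The substantive case is $\mu=0$ with no eigenvalue of positive real part. I would use the invariant splitting $\real n = E^s\oplus E^c$ into the stable subspace $E^s$ (eigenvalues with $\Re<0$) and the center subspace $E^c$ (eigenvalues with $\Re=0$). The $E^s$-component of any trajectory decays to $0$, so the long-time behavior is dictated by the restriction of the flow to $E^c$. On $E^c$, provided the imaginary-axis eigenvalues are semisimple (their Jordan blocks are trivial, so no factor $t^k$ with $k\ge 1$ appears), the exponential acts as a block-diagonal collection of the identity on the $\lambda=0$ part and rotations on the $\lambda=\pm i\omega$ parts. If the only imaginary-axis eigenvalue is $\lambda=0$, the center flow is the identity and trajectories converge to their $E^c$-component, which lies in $\cS_{x_{eq}}:=\ker A_K$, a subspace containing $x_{eq}=0$; this is case~2. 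If there is a nonzero purely imaginary pair, the center flow contains a genuine rotation, and the trajectory converges to the bounded (periodic, or quasi-periodic for several incommensurate pairs) orbit traced in $E^c$, which for a nonzero center-component does not pass through the origin; this is case~3.

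The main obstacle is that the enumeration as stated is not literally exhaustive: a zero or purely-imaginary eigenvalue carrying a nontrivial Jordan block produces polynomial-in-$t$ growth, hence instability, even though no eigenvalue has positive real part, so such a matrix matches none of the four printed conditions verbatim. I would resolve this by either (i) reading case~4 as ``none of 1--3,'' i.e.\ absorbing the non-semisimple imaginary-axis situation into the unstable case through the same $t^k e^{0\cdot t}\to\infty$ estimate, or (ii) observing that in our intended use of the fact (ruling out cases 2--4 under the inward-pointing hypothesis of Proposition~\ref{prop:inward}) any unbounded mode is immediately excluded, so the implicit semisimplicity is harmless. Everything else is the standard reading-off of asymptotics from the Jordan form and requires no further estimates.
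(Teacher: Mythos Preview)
Your approach is exactly what the paper does: it simply remarks that the fact ``can be easily proven by looking at the closed-form solution for $x(t)$ \ldots\ via the classical variation of the constants formula,'' i.e., the Jordan-form analysis of $e^{A_Kt}$ you carry out. Your observation about non-semisimple imaginary-axis eigenvalues is a genuine caveat the paper glosses over, and your proposed resolution (absorb polynomial blow-up into the unstable case, which is harmless for the downstream use in Proposition~\ref{prop:inward}) is the right fix.
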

This fact can be easily proven by looking at the closed-form solution for $x(t)$ (for any arbitrary initial condition $x(0)\in\real{n}$) via the classical variation of the constants formula.
Second, the geometry of our setup implies the following:
\begin{lemma}\label{lemma:scaling}
  Let $\cX_s$ be a scaled version of $\cX$ defined as $\cX_s=\{x\mid A_xx\leq sb_x\}$ for a given scale $s>0$. Then, the field $\dot{x}=A_Kx$ is inward-pointing on $\cX$ if and only if it will also be inward-pointing on $\cX_s$.
\end{lemma}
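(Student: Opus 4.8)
The plan is to use the fact that scaling a polyhedron by a positive factor is a linear bijection that maps the boundary onto the boundary and preserves which constraints are active. First I would record the elementary geometric observation: for $s>0$ a point $x_0$ lies in $\partial\cX$ with active index set $I(x_0)=\{i:A_{xi}x_0=b_{xi}\}$ if and only if $sx_0$ lies in $\partial\cX_s$ with the \emph{same} active set, i.e. $I(sx_0)=I(x_0)$. This is immediate from $A_x(sx_0)=s\,A_xx_0$, from $sb_x$ being the right-hand side defining $\cX_s$, and from $s>0$ (so inequalities and equalities are preserved and no active face is created or destroyed).

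Next I would substitute $sx_0$ into the inward-pointing inequality~\eqref{eq:inward point constraint} written for $\cX_s$. Since $A_K$ is linear, for every $i\in I(x_0)=I(sx_0)$ we have $A_{xi}A_K(sx_0)=s\,(A_{xi}A_Kx_0)$. Hence, if $\dot{x}=A_Kx$ is inward-pointing on $\cX$ with common constant $\delta_l<0$, then for every boundary point $sx_0$ of $\cX_s$ and every active $i$, $A_{xi}A_K(sx_0)=s\,(A_{xi}A_Kx_0)\le s\delta_l$, and $s\delta_l<0$ because $s>0$. Thus the field is inward-pointing on $\cX_s$ with the common constant $\delta_l'=s\delta_l$; the requirement in Definition~\ref{def:inward set} that a \emph{single} $\delta_l'$ serve all of $\partial\cX_s$ holds automatically, since every inequality over $\partial\cX$ was simply multiplied by the same positive scalar $s$.

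For the converse I would run exactly the same argument with the scale $1/s$, using $\cX=(\cX_s)_{1/s}$, so that the ``if'' direction follows from the ``only if'' direction applied to $\cX_s$. I do not expect a genuine obstacle in this lemma; the only point requiring a little care is the bookkeeping of active constraint sets — one must verify that positive scaling neither adds nor removes active faces, which is precisely where the hypothesis $s>0$ and the exact parametrization $A_xx\le sb_x$ are used, and this is what lets the common-$\delta_l$ condition transfer cleanly in both directions.
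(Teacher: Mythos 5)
Your proof is correct and follows essentially the same route as the paper's, which simply notes that the claim follows by applying the inward-pointing definition with the constant $s\delta_l$. You have merely spelled out the details (boundary points and active sets map under $x_0\mapsto sx_0$, the inequality scales by $s$, and the converse uses the scale $1/s$) that the paper leaves implicit.
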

\begin{proof}
  The claim follows by applying definition~\ref{def:inward set} with $s\delta_l$.
\end{proof}
Finally, we have all the elements necessary for proving the main claim of this section.

\begin{proof}[Proposition~\ref{prop:inward}]
  We proceed by contradiction to exclude all possible behaviors listed in Fact~\ref{fact:linear systems} except~\ref{it:gas} global asymptotic convergence.
  \begin{enumerate}
    \setcounter{enumi}{1}
  \item Let $x_0$ be a point in the intersection $\cS_{x_{eq}}\cap \partial\cX$. Then $A_Kx_0=0$, and the point violates the inner-pointing assumption.
  \item Let $s_{1},s_{2}>0$ be two scales such that $\cX_{s_{1}}$ does not contain $\cO$, $\cX_{s_{1}}\cap \cO=\emptyset$, and $\cX_{s_{2}}$ contains $\cO$, $\cO\subset\cX_{s_{2}}$. Then, there exists an $s$ such that $s_1<s<s_2$ and $\cX_s$ intersects $\cO$, $\cX_s\cap\cO=x_0$ for some $x_0\in\cO$ (this follows from the assumption that $\cX_s$ is compact, and by applying the intermediate value theorem to the signed distance between $x_0$ and the set $\cX_s$ as a function of $s$). Then, the periodic solution escapes $\cX_s$ at $x_0$, and thus cannot be inner pointing at that point; by Lemma~\ref{lemma:scaling}, this creates a contradiction with the assumption that the field is inner pointing on $\cX$.
  \item Since the system is unstable, at least one trajectory must escapes $\cX$ at a point $x_0$; again, this contradicts the assumption that the field is inner pointing on $\cX$.
  \end{enumerate}
\end{proof}
\begin{definition}
  We define a \textit{non-final} cell as a cell the agent passes through to reach the goal point. The \textit{final} cell is defined as a cell in which the goal point belongs to that cell. 
\end{definition}
\begin{theorem}
If $x_e$ is in the interior of a cell, and if the solution of (final opt problem) implies $\delta_b<0$, then the controllers $K_i$ will stabilize the system to $x_e$.
\end{theorem}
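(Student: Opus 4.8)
\emph{Proof plan.}
The plan is to chain two ingredients along the acyclic exit-edge structure the high-level planner fixes on the abstract graph $\cG$ (Section~\ref{planning}): first, that from any initial state the closed-loop trajectory visits a finite sequence of \emph{non-final} cells, each left in finite time, so it reaches the \emph{final} cell $\cX_N$ (the one with $x_e$ in its interior) after finitely many controller switches; second, that once inside $\cX_N$ the closed-loop field has $x_e$ as a genuine equilibrium and points strictly inward on all of $\partial\cX_N$, so Proposition~\ref{prop:inward} forces asymptotic convergence to $x_e$. Combining the two yields the theorem.

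For a non-final cell $\cX_i$ with controller $K_i$ obtained from~\eqref{opt-dual}, I would argue as follows. By Section~\ref{sec_cbf} the CBF constraints are imposed on every face of $\cX_i$ \emph{except} its exit face $\cP_{\mathrm{exit}}$; since the optimal $\delta_b<0$, Proposition~\ref{delta_b} (relative degree one) applied to each such wall $h_{iq}$ shows the trajectory never crosses a wall, hence it can only leave $\cX_i$ through $\cP_{\mathrm{exit}}$. Simultaneously $V_i(x)=z_i\transpose(x-x_e)=\dist(x,\cP_{\mathrm{exit}})$, and with the optimal $\delta_l<0$ Proposition~\ref{delta_l} gives exit time $t_{\mathrm{exit}}\le -d_{\max}/\delta_l<\infty$, while $\dot V_i\le\delta_l<0$ makes the crossing of $\cP_{\mathrm{exit}}$ transversal, so the hybrid execution is well defined at the switch. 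Thus the execution enters the next cell of the plan in finite time; since $\cG$ is finite and its exit edges form a shortest-path structure toward the final node, this happens only finitely many times (no Zeno behavior), and after finitely many transitions the state enters $\cX_N$.

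Inside $\cX_N$ the controller solves~\eqref{opt-min}, whose CLF block is replaced by the equilibrium constraint~\eqref{clf_middle}. After the coordinate translation of Section~\ref{sec:stationary} placing the equilibrium at the origin, constraint~\eqref{clf_middle} together with $P_dx_e=0$ forces $u(x_e)=0$, so the closed loop is the homogeneous system $\dot x=A_Kx$ with $A_K=A+B(-K_{pN}\cI P_p+K_{dN}P_d)$ and $A_Kx_e=0$. Since $x_e\in\mathrm{int}(\cX_N)$ the final cell has no exit face, so the CBF constraints of~\eqref{opt-min} cover \emph{all} its faces: on a face $\{A_{xj}x=b_{xj}\}$ one has $h_j\equiv0$, and the relative-degree-one CBF inequality $\dot h_j+\alpha_0 h_j\ge -\delta_b$ collapses to $-A_{xj}A_Kx_0\ge -\delta_b$, i.e. $A_{xj}A_Kx_0\le \delta_b<0$ at every point $x_0$ of that face. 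This is exactly the inward-pointing condition~\eqref{eq:inward point constraint} with common slack $\delta_b$, so by Definition~\ref{def:inward set} the field $\dot x=A_Kx$ is inward-pointing on $\cX_N$; Proposition~\ref{prop:inward} then gives that $x_e$ is asymptotically stable, with $\cX_N$ forward invariant (Nagumo) and contained in its region of attraction. Hence every trajectory reaches $\cX_N$ in finite time and thereafter converges to $x_e$.

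The step I expect to be the main obstacle is this last reduction: turning the CBF differential inequality on the faces of the final cell into the purely algebraic inward-pointing condition required by Proposition~\ref{prop:inward}. It hinges on the relative degree being one (as in Propositions~\ref{delta_l}--\ref{delta_b}) and on careful sign bookkeeping among $h_j$, $A_{xj}$ and $\delta_b$; for general relative degree $r$ one would instead need to show that the higher-order CBF sets $\cC_0\cap\cdots\cap\cC_r$ localize near $x_e$ and prove a higher-order analogue of Proposition~\ref{prop:inward}, which the results above do not cover. A secondary gap to close is the well-posedness of the switched execution across cell boundaries and the acyclicity of the exit-edge structure, which I would invoke as standing assumptions from Section~\ref{planning}.
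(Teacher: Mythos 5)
Your proposal is correct and follows essentially the same route as the paper: reduce the relative-degree-one CBF constraint on the faces of the final cell to the algebraic inward-pointing condition~\eqref{eq:inward point constraint}, use the dual/robust formulation to extend it to the whole cell, and invoke Proposition~\ref{prop:inward}. Your treatment of the non-final cells (finite-time exit via Propositions~\ref{delta_l} and~\ref{delta_b}) and of the equilibrium condition $u(x_e)=0$ via~\eqref{clf_middle} is in fact more explicit than the paper's one-line dismissal of those steps.
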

\begin{proof}
The non-final cells do not have an equilibrium due to Proposition 1.
For the final cell, the CBF condition \eqref{primal-cbf}  evaluated after substituting our controller, and at the boundary of the cell, is equivalent to \eqref{eq:inward point constraint}. The CBF dual constraint \eqref{dual-conscbf} then implies that \eqref{eq:inward point constraint} is satisfied for every point in the final cell. Hence, $\dot{x}$ produced from our controller satisfies Definition \ref{def:inward set} in the final cell.
The claim is then a consequence of Proposition \ref{prop:inward}.
\end{proof}

\subsection{Control With the Limited Field of View}\label{sec:limited-field-of-view}
In the formulation above, it is implicitly assumed that the controller has access to all the landmarks measurements at all times. However, in practice, a robot will only be able to detect a subset of the landmarks due to a limited field of view or environment occlusions. To tackle this issue, we show in this section that the controller $u$ \eqref{u} can be designed using multiple landmarks (as in the preceding section) but then computed using a subset of landmarks.
\begin{proposition} \label{prop:limited field view}
  Let $K_p=\bmat{K_{P1},\cdots,K_{Pi},\cdots, K_{Pl}}$ be a partition of the controller matrix conformal with $L^\vee$. Without loss of generality, assume that we see all the landmarks with $i\leq\hat{\imath}$, while landmarks $\hat{\imath}+1\leq i \leq l$ are not visible. Then the controller \eqref{u} can be equivalently written as
  \begin{equation}\label{u_new}
    u = \sum_{i=1}^{\hat{\imath}} K_{Pi} y_i +\sum_{i=\hat{\imath}+1}^l K_{Pj}y_{\hat{\imath}}+k_{\text{bias}}+K_b+ K_dx_d,
  \end{equation}
  where $k_{\text{bias},i}\in\real{n}$ is a constant vector given by
  \begin{equation}
    k_{\text{bias},i}=\sum_{j=
    \hat{\imath}+1}^{l}K_{Pj}(\hat{l}_j-\hat{l}_{\hat{\imath}})
  \end{equation}
\end{proposition}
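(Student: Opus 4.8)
The plan is to reduce the claim to a single elementary identity and then collect constant terms. Since every landmark location is fixed and known (cf.\ the definition of a landmark), for any visible index $\hat{\imath}$ and any other index $j$ we have the exact decomposition
\begin{equation}
  y_j = l_j - x_p = \big(l_{\hat{\imath}} - x_p\big) + \big(l_j - l_{\hat{\imath}}\big) = y_{\hat{\imath}} + \big(\hat{l}_j - \hat{l}_{\hat{\imath}}\big),
\end{equation}
where $y_{\hat{\imath}}$ is available at runtime because landmark $\hat{\imath}$ is seen, and $\hat{l}_j - \hat{l}_{\hat{\imath}}$ is a constant vector that depends only on the stored map. The whole proposition is essentially the assertion that substituting this decomposition into the \emph{unseen} blocks of the control law changes nothing.

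First I would expand the original controller \eqref{u} against the block partition $K_p=\bmat{K_{P1},\cdots,K_{Pl}}$ conformal with $y=\stack(l_i-x_p)$, writing $u = \sum_{i=1}^{l} K_{Pi} y_i + K_d x_d + K_b$. Second, I would split this sum at $\hat{\imath}$ into a visible part $\sum_{i=1}^{\hat{\imath}} K_{Pi} y_i$, left untouched, and an invisible part $\sum_{i=\hat{\imath}+1}^{l} K_{Pi} y_i$, into which I substitute the decomposition above, obtaining
\begin{equation}
  \sum_{i=\hat{\imath}+1}^{l} K_{Pi} y_i = \Big(\sum_{i=\hat{\imath}+1}^{l} K_{Pi}\Big) y_{\hat{\imath}} + \sum_{i=\hat{\imath}+1}^{l} K_{Pi}\big(\hat{l}_i-\hat{l}_{\hat{\imath}}\big).
\end{equation}
Third, I would recognize the last (constant) sum as exactly $k_{\text{bias}}$ from the statement and reassemble the pieces, which reproduces \eqref{u_new} verbatim; since every step was an identity, the two expressions evaluate to the same $u$ at every state, establishing the equivalence.

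The computation is routine, so the only points needing care — the closest thing to an obstacle — are structural rather than analytical: one must check that the partition of $K_p$ used at design time is genuinely conformal with the stacked output $y$ (so that $K_{Pi}$ multiplies precisely the block $l_i-x_p$), and that the reference index $\hat{\imath}$ is chosen among the \emph{visible} landmarks so that $y_{\hat{\imath}}$ is actually measurable; any visible landmark may play that role, which alters $k_{\text{bias}}$ but not $u$. It is finally worth noting that the identity is exact precisely because the positions entering $k_{\text{bias}}$ are the known nominal ones; were the stored map imperfect, the residual of the substitution would be a constant mismatch term, which is why it is natural to call it a ``bias'' and why it is absorbed by the robustness margins $\delta_l,\delta_b$ introduced in the preceding sections.
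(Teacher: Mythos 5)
Your proposal is correct and follows essentially the same route as the paper's proof: the decomposition $y_j = y_{\hat{\imath}} + (\hat{l}_j - \hat{l}_{\hat{\imath}})$ you substitute into the unseen blocks is exactly the add-and-subtract of $\sum_{j=\hat{\imath}+1}^{l} K_{Pj}\, y_{\hat{\imath}}$ that the paper performs, followed by the same regrouping into the constant bias term.
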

\begin{proof}
  Using the conformal partition of $K_p$, we can expand \eqref{u} as
  \begin{equation}\label{decomU}
    u_p = \sum_jK_{pj}(\hat{l}_j-x).
  \end{equation}
  Adding and subtracting $\sum_{j=
    \hat{\imath}+1}^{l} K_{pj} y_{\hat{\imath}}=\sum_j K_{pj}(\hat{l}_{\hat{\imath}}-x)$ (note that $K_p$ and $y$ have different subscripts) and reordering, we have
  \begin{equation}\label{u_limited}
      u_p = \sum_{i=1}^{\hat{\imath}} K_{pi} y_i +
       \sum_{j=\hat{\imath}+1}^l K_{pj}(\hat{l}_{\hat{\imath}}-x)+
       \sum_{j=
    \hat{\imath}+1}^{l}K_{pj}(\hat{l}_j-\hat{l}_{\hat{\imath}}),
  \end{equation}
  from which the claim follows.
\end{proof}

Note that we could also merge all visible landmarks into a single virtual landmark, which can be chosen to minimize (at every time instant) the effect of noise \cite{wang2021chance}.

Using the fact that the global positions of the landmarks are known during planning, our new Proposition~\ref{prop:limited field view} shows that it is possible to implement the controller $u$ by measuring a single displacement $y_i$; moreover, since the original controller \eqref{u} is smooth, one can also switch among different landmarks without introducing discontinuities in the control. Although we stated our result for a single landmark, it is possible to prove a similar claim for any subset of landmarks.

\section{Control With Bearing Measurements}\label{sec:bearing-measurments}
As mentioned in the introduction, in this paper, we consider a robot equipped with a monocular camera that generally does not provide the depth of a target object in the image and instead measures the corresponding relative bearing (Sec.~\ref{sec:bearing}).
In this section, we only consider a driftless system(i.e., $A=0$) with a relative degree of one and  show that in this case, it is possible to still use the control synthesis method of Sec.~\ref{sec:problem setup} after rescaling the bearing measurements, such that they are similar to the ideal displacement measurements. We divide the section into two parts. First, we give details on the rescaling procedure; second, we show that the resulting bearing controller still solves the path planning problem, albeit with modified CLF and CBF conditions.
\subsection{Bearing direction measurements}\label{sec:bearing}
In this work, we assume global compass direction is available (the bearing directions can
be compared in the same frame of reference) and the robot has only access to bearing direction measurements (see Fig.~\ref{bearing}). The bearing direction measurements are defined as
\begin{equation}\label{bearing_mes}
    \beta_i= d_i(x)^{-1}(l_i-x),\;\;\;\; i=\{1,\hdots,N\},
\end{equation}
where $N$ is the number of \emph{landmarks} and $d_i$ is the relative displacement measurement between the position of the robot and landmark $l_i$
\begin{equation}\label{distance}
    d_i(x_p) = \norm{l_i-x_p}.
\end{equation}
The location of landmarks ${l}_i$ is assumed to be fixed and known at the planning time. Additionally, we assume at the implementation time, the robot can measure the bearing direction $\beta_i$ between its position $x_p$ and a set of landmarks $l_i$.

\begin{figure*}
\subfloat[]{\label{bearing}\scalebox{.75}{\tikzset{every picture/.style={line width=0.75pt}} 

\begin{tikzpicture}[x=0.75pt,y=0.75pt,yscale=-1,xscale=1]

\draw  [color={rgb, 255:red, 19; green, 27; blue, 218 }  ,draw opacity=1 ][fill={rgb, 255:red, 28; green, 23; blue, 213 }  ,fill opacity=1 ] (74.52,94.22) .. controls (74.52,89.58) and (78.19,85.81) .. (82.71,85.81) .. controls (87.24,85.81) and (90.9,89.58) .. (90.9,94.22) .. controls (90.9,98.87) and (87.24,102.63) .. (82.71,102.63) .. controls (78.19,102.63) and (74.52,98.87) .. (74.52,94.22) -- cycle ;
\draw  [color={rgb, 255:red, 19; green, 27; blue, 218 }  ,draw opacity=1 ][fill={rgb, 255:red, 28; green, 23; blue, 213 }  ,fill opacity=1 ] (176.6,58.23) .. controls (176.6,53.58) and (180.27,49.82) .. (184.79,49.82) .. controls (189.31,49.82) and (192.98,53.58) .. (192.98,58.23) .. controls (192.98,62.87) and (189.31,66.64) .. (184.79,66.64) .. controls (180.27,66.64) and (176.6,62.87) .. (176.6,58.23) -- cycle ;
\draw  [color={rgb, 255:red, 19; green, 27; blue, 218 }  ,draw opacity=1 ][fill={rgb, 255:red, 28; green, 23; blue, 213 }  ,fill opacity=1 ] (67.67,204.55) .. controls (67.67,199.91) and (71.33,196.14) .. (75.86,196.14) .. controls (80.38,196.14) and (84.05,199.91) .. (84.05,204.55) .. controls (84.05,209.2) and (80.38,212.96) .. (75.86,212.96) .. controls (71.33,212.96) and (67.67,209.2) .. (67.67,204.55) -- cycle ;
\draw [color={rgb, 255:red, 74; green, 144; blue, 226 }  ,draw opacity=1 ] [dash pattern={on 0.84pt off 2.51pt}]  (296.77,184.99) -- (192.7,68.87) ;
\draw [shift={(190.69,66.64)}, rotate = 48.13] [fill={rgb, 255:red, 74; green, 144; blue, 226 }  ,fill opacity=1 ][line width=0.08]  [draw opacity=0] (8.93,-4.29) -- (0,0) -- (8.93,4.29) -- cycle    ;
\draw [color={rgb, 255:red, 74; green, 144; blue, 226 }  ,draw opacity=1 ] [dash pattern={on 0.84pt off 2.51pt}]  (296.77,184.99) -- (90.62,99.09) ;
\draw [shift={(87.86,97.94)}, rotate = 22.62] [fill={rgb, 255:red, 74; green, 144; blue, 226 }  ,fill opacity=1 ][line width=0.08]  [draw opacity=0] (8.93,-4.29) -- (0,0) -- (8.93,4.29) -- cycle    ;
\draw [color={rgb, 255:red, 74; green, 144; blue, 226 }  ,draw opacity=1 ] [dash pattern={on 0.84pt off 2.51pt}]  (296.77,184.99) -- (87.03,204.28) ;
\draw [shift={(84.05,204.55)}, rotate = 354.75] [fill={rgb, 255:red, 74; green, 144; blue, 226 }  ,fill opacity=1 ][line width=0.08]  [draw opacity=0] (8.93,-4.29) -- (0,0) -- (8.93,4.29) -- cycle    ;
\draw [color={rgb, 255:red, 126; green, 211; blue, 33 }  ,draw opacity=1 ] [dash pattern={on 0.84pt off 2.51pt}]  (79.38,97.06) -- (177.46,61.65) ;
\draw [color={rgb, 255:red, 126; green, 211; blue, 33 }  ,draw opacity=1 ] [dash pattern={on 0.84pt off 2.51pt}]  (75.86,204.55) -- (82.71,94.22) ;
\draw [color={rgb, 255:red, 126; green, 211; blue, 33 }  ,draw opacity=1 ] [dash pattern={on 0.84pt off 2.51pt}]  (75.86,204.55) -- (184.79,58.23) ;
\draw [color={rgb, 255:red, 245; green, 166; blue, 35 }  ,draw opacity=1 ]   (75.86,204.55) -- (103.55,168.4) ;
\draw [shift={(105.38,166.01)}, rotate = 127.45] [fill={rgb, 255:red, 245; green, 166; blue, 35 }  ,fill opacity=1 ][line width=0.08]  [draw opacity=0] (10.72,-5.15) -- (0,0) -- (10.72,5.15) -- (7.12,0) -- cycle    ;
\draw [color={rgb, 255:red, 245; green, 166; blue, 35 }  ,draw opacity=1 ]   (75.86,204.55) -- (77.11,158.06) ;
\draw [shift={(77.19,155.06)}, rotate = 91.54] [fill={rgb, 255:red, 245; green, 166; blue, 35 }  ,fill opacity=1 ][line width=0.08]  [draw opacity=0] (10.72,-5.15) -- (0,0) -- (10.72,5.15) -- (7.12,0) -- cycle    ;
\draw [color={rgb, 255:red, 208; green, 2; blue, 27 }  ,draw opacity=1 ]   (82.71,94.22) -- (79.69,135.64) ;
\draw [shift={(79.48,138.63)}, rotate = 274.17] [fill={rgb, 255:red, 208; green, 2; blue, 27 }  ,fill opacity=1 ][line width=0.08]  [draw opacity=0] (10.72,-5.15) -- (0,0) -- (10.72,5.15) -- (7.12,0) -- cycle    ;
\draw [color={rgb, 255:red, 208; green, 2; blue, 27 }  ,draw opacity=1 ]   (82.71,94.22) -- (125.57,80.28) ;
\draw [shift={(128.42,79.35)}, rotate = 161.98] [fill={rgb, 255:red, 208; green, 2; blue, 27 }  ,fill opacity=1 ][line width=0.08]  [draw opacity=0] (10.72,-5.15) -- (0,0) -- (10.72,5.15) -- (7.12,0) -- cycle    ;
\draw [color={rgb, 255:red, 82; green, 135; blue, 20 }  ,draw opacity=1 ]   (184.79,58.23) -- (144.71,74.88) ;
\draw [shift={(141.94,76.03)}, rotate = 337.44] [fill={rgb, 255:red, 82; green, 135; blue, 20 }  ,fill opacity=1 ][line width=0.08]  [draw opacity=0] (10.72,-5.15) -- (0,0) -- (10.72,5.15) -- (7.12,0) -- cycle    ;
\draw [color={rgb, 255:red, 82; green, 135; blue, 20 }  ,draw opacity=1 ]   (184.79,58.23) -- (159.67,93.92) ;
\draw [shift={(157.94,96.37)}, rotate = 305.14] [fill={rgb, 255:red, 82; green, 135; blue, 20 }  ,fill opacity=1 ][line width=0.08]  [draw opacity=0] (10.72,-5.15) -- (0,0) -- (10.72,5.15) -- (7.12,0) -- cycle    ;
\draw  [color={rgb, 255:red, 0; green, 0; blue, 0 }  ,draw opacity=1 ][fill={rgb, 255:red, 0; green, 0; blue, 0 }  ,fill opacity=1 ] (288.58,184.99) .. controls (288.58,180.34) and (292.25,176.58) .. (296.77,176.58) .. controls (301.29,176.58) and (304.96,180.34) .. (304.96,184.99) .. controls (304.96,189.64) and (301.29,193.4) .. (296.77,193.4) .. controls (292.25,193.4) and (288.58,189.64) .. (288.58,184.99) -- cycle ;
\draw [color={rgb, 255:red, 74; green, 144; blue, 226 }  ,draw opacity=1 ]   (296.77,184.99) -- (263.84,148.49) ;
\draw [shift={(262.5,147)}, rotate = 47.95] [fill={rgb, 255:red, 74; green, 144; blue, 226 }  ,fill opacity=1 ][line width=0.08]  [draw opacity=0] (12,-3) -- (0,0) -- (12,3) -- cycle    ;
\draw [color={rgb, 255:red, 74; green, 144; blue, 226 }  ,draw opacity=1 ]   (296.77,184.99) -- (253.44,166.03) ;
\draw [shift={(251.61,165.23)}, rotate = 23.63] [fill={rgb, 255:red, 74; green, 144; blue, 226 }  ,fill opacity=1 ][line width=0.08]  [draw opacity=0] (12,-3) -- (0,0) -- (12,3) -- cycle    ;
\draw [color={rgb, 255:red, 74; green, 144; blue, 226 }  ,draw opacity=1 ]   (296.77,184.99) -- (248.49,189.96) ;
\draw [shift={(246.5,190.17)}, rotate = 354.12] [fill={rgb, 255:red, 74; green, 144; blue, 226 }  ,fill opacity=1 ][line width=0.08]  [draw opacity=0] (12,-3) -- (0,0) -- (12,3) -- cycle    ;

\draw (305.77,178.05) node [anchor=north west][inner sep=0.75pt]    {$x$};
\draw (202.36,34.17) node [anchor=north west][inner sep=0.75pt]    {$l_{1}$};
\draw (65.24,54.52) node [anchor=north west][inner sep=0.75pt]    {$l_{2}$};
\draw (48.48,182.06) node [anchor=north west][inner sep=0.75pt]    {$l_{3}$};
\draw (92.52,63.52) node [anchor=north west][inner sep=0.75pt]  [font=\tiny]  {$\beta _{2}^{1}$};
\draw (58.24,106.55) node [anchor=north west][inner sep=0.75pt]  [font=\tiny]  {$\beta _{2}^{3}$};
\draw (111.56,169.94) node [anchor=north west][inner sep=0.75pt]  [font=\tiny]  {$\beta _{3}^{1}$};
\draw (58.24,154.29) node [anchor=north west][inner sep=0.75pt]  [font=\tiny]  {$\beta _{3}^{2}$};
\draw (174.79,92.47) node [anchor=north west][inner sep=0.75pt]  [font=\tiny]  {$\beta _{1}^{3}$};
\draw (150.41,45.52) node [anchor=north west][inner sep=0.75pt]  [font=\tiny]  {$\beta _{1}^{2}$};
\draw (262.56,188.94) node [anchor=north west][inner sep=0.75pt]  [font=\scriptsize]  {$\beta _{3}$};
\draw (264.06,172.44) node [anchor=north west][inner sep=0.75pt]  [font=\scriptsize]  {$\beta _{2}$};
\draw (276.56,150.44) node [anchor=north west][inner sep=0.75pt]  [font=\scriptsize]  {$\beta _{1}$};

\end{tikzpicture}}}
\hfill
 \subfloat[]{\label{circle}\scalebox{.75}{\input{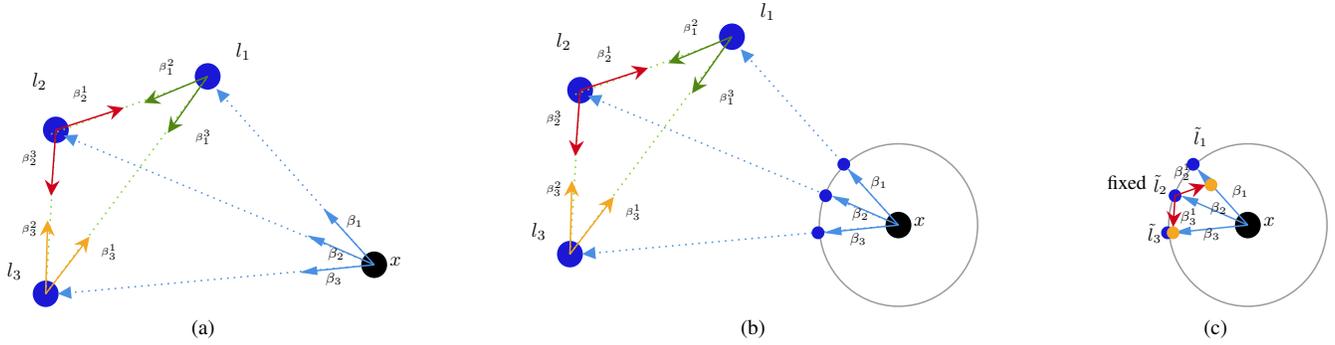}}}
 \hfill
  \subfloat[]{\label{config}\scalebox{.75}{\tikzset{every picture/.style={line width=0.75pt}} 

\begin{tikzpicture}[x=0.75pt,y=0.75pt,yscale=-1,xscale=1]

\draw  [color={rgb, 255:red, 155; green, 155; blue, 155 }  ,draw opacity=1 ] (263.54,204.99) .. controls (263.54,174.79) and (287.37,150.31) .. (316.77,150.31) .. controls (346.17,150.31) and (370,174.79) .. (370,204.99) .. controls (370,235.19) and (346.17,259.67) .. (316.77,259.67) .. controls (287.37,259.67) and (263.54,235.19) .. (263.54,204.99) -- cycle ;
\draw  [color={rgb, 255:red, 19; green, 27; blue, 218 }  ,draw opacity=1 ][fill={rgb, 255:red, 28; green, 23; blue, 213 }  ,fill opacity=1 ] (264.04,185.23) .. controls (264.04,183.12) and (265.74,181.42) .. (267.83,181.42) .. controls (269.92,181.42) and (271.61,183.12) .. (271.61,185.23) .. controls (271.61,187.34) and (269.92,189.05) .. (267.83,189.05) .. controls (265.74,189.05) and (264.04,187.34) .. (264.04,185.23) -- cycle ;
\draw  [color={rgb, 255:red, 19; green, 27; blue, 218 }  ,draw opacity=1 ][fill={rgb, 255:red, 28; green, 23; blue, 213 }  ,fill opacity=1 ] (276.23,164.11) .. controls (276.23,162) and (277.93,160.29) .. (280.02,160.29) .. controls (282.11,160.29) and (283.8,162) .. (283.8,164.11) .. controls (283.8,166.21) and (282.11,167.92) .. (280.02,167.92) .. controls (277.93,167.92) and (276.23,166.21) .. (276.23,164.11) -- cycle ;
\draw  [color={rgb, 255:red, 19; green, 27; blue, 218 }  ,draw opacity=1 ][fill={rgb, 255:red, 28; green, 23; blue, 213 }  ,fill opacity=1 ] (258.93,210.17) .. controls (258.93,208.06) and (260.63,206.35) .. (262.72,206.35) .. controls (264.81,206.35) and (266.5,208.06) .. (266.5,210.17) .. controls (266.5,212.27) and (264.81,213.98) .. (262.72,213.98) .. controls (260.63,213.98) and (258.93,212.27) .. (258.93,210.17) -- cycle ;
\draw  [color={rgb, 255:red, 0; green, 0; blue, 0 }  ,draw opacity=1 ][fill={rgb, 255:red, 0; green, 0; blue, 0 }  ,fill opacity=1 ] (308.58,204.99) .. controls (308.58,200.34) and (312.25,196.58) .. (316.77,196.58) .. controls (321.29,196.58) and (324.96,200.34) .. (324.96,204.99) .. controls (324.96,209.64) and (321.29,213.4) .. (316.77,213.4) .. controls (312.25,213.4) and (308.58,209.64) .. (308.58,204.99) -- cycle ;
\draw [color={rgb, 255:red, 74; green, 144; blue, 226 }  ,draw opacity=1 ]   (316.77,204.99) -- (283.84,168.49) ;
\draw [shift={(282.5,167)}, rotate = 47.95] [fill={rgb, 255:red, 74; green, 144; blue, 226 }  ,fill opacity=1 ][line width=0.08]  [draw opacity=0] (12,-3) -- (0,0) -- (12,3) -- cycle    ;
\draw [color={rgb, 255:red, 74; green, 144; blue, 226 }  ,draw opacity=1 ]   (316.77,204.99) -- (273.44,186.03) ;
\draw [shift={(271.61,185.23)}, rotate = 23.63] [fill={rgb, 255:red, 74; green, 144; blue, 226 }  ,fill opacity=1 ][line width=0.08]  [draw opacity=0] (12,-3) -- (0,0) -- (12,3) -- cycle    ;
\draw [color={rgb, 255:red, 74; green, 144; blue, 226 }  ,draw opacity=1 ]   (316.77,204.99) -- (268.49,209.96) ;
\draw [shift={(266.5,210.17)}, rotate = 354.12] [fill={rgb, 255:red, 74; green, 144; blue, 226 }  ,fill opacity=1 ][line width=0.08]  [draw opacity=0] (12,-3) -- (0,0) -- (12,3) -- cycle    ;
\draw  [color={rgb, 255:red, 245; green, 166; blue, 35 }  ,draw opacity=1 ][fill={rgb, 255:red, 245; green, 166; blue, 35 }  ,fill opacity=1 ] (288.22,178) .. controls (288.22,175.89) and (289.91,174.18) .. (292,174.18) .. controls (294.09,174.18) and (295.78,175.89) .. (295.78,178) .. controls (295.78,180.11) and (294.09,181.82) .. (292,181.82) .. controls (289.91,181.82) and (288.22,180.11) .. (288.22,178) -- cycle ;
\draw [color={rgb, 255:red, 208; green, 2; blue, 27 }  ,draw opacity=1 ]   (267.83,185.23) -- (285.39,179) ;
\draw [shift={(288.22,178)}, rotate = 160.47] [fill={rgb, 255:red, 208; green, 2; blue, 27 }  ,fill opacity=1 ][line width=0.08]  [draw opacity=0] (10.72,-5.15) -- (0,0) -- (10.72,5.15) -- (7.12,0) -- cycle    ;
\draw  [color={rgb, 255:red, 245; green, 166; blue, 35 }  ,draw opacity=1 ][fill={rgb, 255:red, 245; green, 166; blue, 35 }  ,fill opacity=1 ] (262.72,210.17) .. controls (262.72,208.06) and (264.41,206.35) .. (266.5,206.35) .. controls (268.59,206.35) and (270.28,208.06) .. (270.28,210.17) .. controls (270.28,212.27) and (268.59,213.98) .. (266.5,213.98) .. controls (264.41,213.98) and (262.72,212.27) .. (262.72,210.17) -- cycle ;
\draw [color={rgb, 255:red, 208; green, 2; blue, 27 }  ,draw opacity=1 ]   (267.83,185.23) -- (266.69,203.36) ;
\draw [shift={(266.5,206.35)}, rotate = 273.6] [fill={rgb, 255:red, 208; green, 2; blue, 27 }  ,fill opacity=1 ][line width=0.08]  [draw opacity=0] (10.72,-5.15) -- (0,0) -- (10.72,5.15) -- (7.12,0) -- cycle    ;

\draw (325.77,198.05) node [anchor=north west][inner sep=0.75pt]    {$x$};
\draw (247.98,202.56) node [anchor=north west][inner sep=0.75pt]  [font=\small]  {$\tilde{l}_{3}$};
\draw (251.98,170.06) node [anchor=north west][inner sep=0.75pt]  [font=\small]  {$\tilde{l}_{2}$};
\draw (278.98,138.06) node [anchor=north west][inner sep=0.75pt]  [font=\small]  {$\tilde{l}_{1}$};
\draw (220.67,170.06) node [anchor=north west][inner sep=0.75pt]   [align=left] {fixed};
\draw (265.33,162.4) node [anchor=north west][inner sep=0.75pt]  [font=\scriptsize]  {$\beta _{2}^{1}$};
\draw (269.83,192.45) node [anchor=north west][inner sep=0.75pt]  [font=\scriptsize]  {$\beta _{3}^{1}$};
\draw (304.67,177.07) node [anchor=north west][inner sep=0.75pt]  [font=\scriptsize]  {$\beta _{1}$};
\draw (285.33,205.73) node [anchor=north west][inner sep=0.75pt]  [font=\scriptsize]  {$\beta _{3}$};
\draw (290,188.4) node [anchor=north west][inner sep=0.75pt]  [font=\scriptsize]  {$\beta _{2}$};

\end{tikzpicture}}}
  \caption{a) Bearing direction measurements. Landmarks are shown by blue circles, and the robot is shown by a black circle. We assume the robot measures the bearing to landmarks directly and they are shown by blue arrows. The position of landmarks is known, and bearing direction measurements between landmarks are shown by purple, green, and orange arrows. b) As bearing measurements are unit vectors, the robot sees all the landmarks within 1 unit from itself. c) modify bearing measurements by assuming the landmark $l_2$ is fixed.}
\end{figure*}
For simplicity, in this section, we consider only 2-D environments (however, an extension to the 3-D case is possible with minor modifications). Without loss of generality, we assume that the robot position $x$ is equal to the zero vector (if not, all the computations below are still valid after introducing opportune shifts by $x$).

The bearings $\{\beta_i\}$ can be identified with points on a unitary circle centered at the origin (Fig.~\ref{bearing}).

For the current cell (and hence its associated controller) we pick a \emph{fixed} landmark $f$ among those available. We then define $\tilde{l}_f$ as the point on the circle corresponding to $\beta_f$. Our goal is to rescale all the other bearings $i\neq f$ such that they are similar to the full displacements.

For this purpose, we define \emph{inter-landmark} bearing directions between landmark $f$ and every other landmarks $i\neq f$ as
\begin{equation}
    \beta_i^f= d_i(l_f)^{-1}(l_i-l_f);
\end{equation}
note that these bearings can be pre-computed from the known locations of the landmarks on the map.
For each landmark, $i\neq f$, let $P_i^f$ be the line passing through the fixed landmark $\tilde{l}_f$ with direction $\beta_i^f$, and let $P_i$ be the line passing through the robot position $x$ with direction $\beta_i$. We then define the scaled landmark position $\tilde{l}_i$ as the intersection between $P_i^f$ and $P_i$.

\begin{lemma}
Let $s(x_p)=d_f(x_P)$ be the distance between landmark $f$ and the robot; then, we have that $l_i-x=s(x_p) (\tilde{l}_i-x_p)$ for each landmark $i$.
\end{lemma}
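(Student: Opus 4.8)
The plan is to exhibit a single \emph{central scaling} (homothety) of the plane, centered at the robot, that simultaneously maps every true landmark $l_i$ to its rescaled counterpart $\tilde{l}_i$. Concretely, using $x=x_p=\vct{0}$ as assumed in this section, set $s=s(x_p)=d_f(x_p)=\norm{l_f-x}>0$ and consider the map $H(p)=x+\tfrac{1}{s}(p-x)$. The claimed identity $l_i-x=s(\tilde{l}_i-x_p)$ is, after dividing by $s$, exactly the statement $H(l_i)=\tilde{l}_i$, so it suffices to prove $H(l_i)=\tilde{l}_i$ for every landmark $i$.

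First I would dispatch the fixed landmark $i=f$. Since $\beta_f=d_f(x)^{-1}(l_f-x)$ by definition, $H(l_f)=x+d_f(x_p)^{-1}(l_f-x)=x+\beta_f$, which is precisely the point on the unit circle centered at $x$ in direction $\beta_f$, namely $\tilde{l}_f$. Hence $H(l_f)=\tilde{l}_f$.

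Next, for $i\neq f$, I would use two elementary properties of $H$: it fixes (setwise) every line through its center $x$, and, being a positive homothety, it sends each line to a parallel line (it preserves directions). Observation (a): because $\beta_i=d_i(x_p)^{-1}(l_i-x)$ is a unit vector along $l_i-x$, the point $l_i$ lies on the line $P_i$ through $x$ with direction $\beta_i$, and since $x$ is the center of $H$ we get $H(l_i)\in P_i$. Observation (b): because $\beta_i^f=d_i(l_f)^{-1}(l_i-l_f)$, the point $l_i$ lies on the line through $l_f$ with direction $\beta_i^f$; applying $H$, its image $H(l_i)$ lies on the line through $H(l_f)=\tilde{l}_f$ with the same direction $\beta_i^f$, which is exactly $P_i^f$. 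Combining (a) and (b), $H(l_i)\in P_i\cap P_i^f=\{\tilde{l}_i\}$, so $H(l_i)=\tilde{l}_i$ and the lemma follows. (Equivalently, one could parametrize $\tilde{l}_i=x+t\beta_i$ on $P_i$ and $\tilde{l}_i=\tilde{l}_f+u\beta_i^f$ on $P_i^f$ and solve the $2\times 2$ linear system for $t$, finding $t=d_i(x)/s$; this is the routine-calculation version of the same fact.)

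The one point needing care --- and the main obstacle --- is the well-definedness of $\tilde{l}_i$ as the \emph{unique} intersection $P_i\cap P_i^f$. These two lines are parallel exactly when $\beta_i$ and $\beta_i^f$ are parallel, i.e. when $x$, $l_f$, and $l_i$ are collinear; in that degenerate configuration one must either exclude such landmarks or define $\tilde{l}_i$ by continuity in $x$, in which case the identity $l_i-x=s(\tilde{l}_i-x_p)$ persists by passing to the limit since both sides are continuous where defined. Away from this exceptional set the intersection is a single point and the homothety argument applies verbatim. I would also remark that the same computation shows $\tilde{l}_i-x_p$ depends on the robot position only through the scalar $s(x_p)$, which is the structural feature that makes the subsequent rescaling of the bearing controller possible.
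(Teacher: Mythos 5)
Your proof is correct and is essentially the same argument as the paper's: the homothety $H$ centered at $x_p$ with ratio $1/s$ is precisely the similarity transformation relating the two triangles $l_f,x_p,l_i$ and $\tilde{l}_f,x_p,\tilde{l}_i$ that the paper invokes, with the ratio pinned down by $\norm{\tilde{l}_f-x_p}=1$. Your version is somewhat more careful --- in particular you flag the degenerate case where $x_p$, $l_f$, $l_i$ are collinear and $P_i\cap P_i^f$ is not a single point, which the paper's proof silently ignores --- but the underlying geometric content is identical.
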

\begin{proof}
The triangles $l_f,x_p,l_i$ and $\tilde{l}_f,x_p,\tilde{l}_i$ are similar since they have identical internal angles. Moreover, $\norm{\tilde{l}_f-x_p}=1$ by construction, the ratio between the segments $l_f,x_p$ and $\tilde{l}_f,x_p$ is equal to $s(x_p)$. Combining these two facts, we have that the ratio between the segments $l_f,x_p$ and $\tilde{l}_f,x_p$ is also $s(x_p)$; the claim then follows.
\end{proof}

Our proposed solution is then to compute the scale displacements $\tilde{\cY}=\stack(\{\tilde{l}_i-x_p\})$, which are then used with the pre-computed controller
\begin{equation}\label{eq:u tilde}
\tilde{u}_{ij}(x)=K_{ij}\tilde{\cY} .
\end{equation}
\subsection{Analysis of the Bearing Controller}%
\label{sec:bearing_measurement}

The following lemma shows that the original displacement-based controller $u_{ij}$ and our proposed bearing-based controller $\tilde{u}_{ij}$ are essentially equivalent from the point of view of path planning.

\begin{proposition}\label{prop:speed reparametrization}
Assume $s(x_p)$ is uniformly upper bounded (i.e., $s(x_p)<\infty$ for all $x_p\in \cX_{ij}$). The controllers $u_{ij}$ in \eqref{u} and $\tilde{u}_{ij}$ in \eqref{eq:u tilde} produce the same paths (but traced, in general, with different speeds) for the driftless system \eqref{sys1} (i.e., $A=0$) when started from the same initial condition.
\end{proposition}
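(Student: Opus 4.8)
The plan is to reduce the claim to the elementary fact that rescaling a vector field by a strictly positive scalar function of the state preserves the (unparametrized) integral curves, changing only the speed at which they are traced. Concretely, I would show that on the driftless system $\dot x = Bu$ the closed loop obtained from $\tilde u_{ij}$ is a positive multiple of the closed loop obtained from $u_{ij}$, and then exhibit the explicit time change relating the two.

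First I would invoke the preceding lemma, $l_i-x_p=s(x_p)\,(\tilde l_i-x_p)$ for every landmark $i$. Stacking over $i$ and comparing with \eqref{vec-landmarks} and with $\tilde{\cY}=\stack(\{\tilde l_i-x_p\})$, this says exactly $\tilde{\cY}=s(x_p)^{-1}\,y$. In the present setting the controller of \eqref{u} acts on the displacement stack as a fixed linear map $K_{ij}$, so from \eqref{eq:u tilde} we get $\tilde u_{ij}(x)=K_{ij}\tilde{\cY}=s(x_p)^{-1}K_{ij}y=s(x_p)^{-1}u_{ij}(x)$. Since $A=0$, the two closed-loop systems are therefore
\begin{equation*}
\dot x = f(x)\defeq Bu_{ij}(x),\qquad \dot x = B\tilde u_{ij}(x)=\sigma(x)\,f(x),\qquad \sigma(x)\defeq s(x_p)^{-1}>0 .
\end{equation*}

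Next I would run the time-reparametrization argument. Let $\tilde x(\tau)$ solve the bearing closed loop with $\tilde x(0)=x_0$; define the strictly increasing map $t(\tau)=\int_0^\tau \sigma(\tilde x(\tau'))\,\de\tau'$ and let $\tau(\cdot)$ be its inverse. Setting $x(t)=\tilde x(\tau(t))$ and differentiating, $\dot x(t)=\sigma(\tilde x)f(\tilde x)\cdot\tfrac{\de\tau}{\de t}=\sigma(\tilde x)f(\tilde x)\cdot\sigma(\tilde x)^{-1}=f(x(t))$, so $x(\cdot)$ solves the displacement closed loop from the same initial condition, and by uniqueness for the affine-in-$x$ (hence locally Lipschitz) right-hand side it is \emph{the} solution. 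Because $\tilde x$ and $x$ differ only by a strictly monotone reparametrization of time, they sweep out the same subset of $\cX_{ij}$, i.e., the same path, which is the claim.

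The main obstacle is ensuring the reparametrization is well defined and \emph{proper}. Well-definedness needs $\sigma=s^{-1}$ finite, i.e., the robot never coincides with the fixed landmark $f$; this is automatic, for instance when $f$ lies outside $\overline{\cX_{ij}}$, giving $s(x_p)\ge\dist(f,\cX_{ij})>0$. Properness — that $t(\tau)\to\infty$ as $\tau\to\infty$, so that the entire path (and not just a prefix) is recovered — is precisely where the hypothesis that $s(x_p)$ is uniformly upper bounded on $\cX_{ij}$ enters: it yields $\sigma(x_p)\ge 1/\sup_{\cX_{ij}}s>0$ and hence $t(\tau)\ge\tau/\sup_{\cX_{ij}}s$. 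Finally, one should note that the identity $\tilde{\cY}=s(x_p)^{-1}y$ is only guaranteed while $x_p\in\cX_{ij}$; this is not a real restriction because a positive rescaling of the field preserves forward invariance of $\cX_{ij}$ (established earlier for the displacement controller), so both trajectories remain in the cell for as long as needed.
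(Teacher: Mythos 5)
Your proposal is correct and follows essentially the same route as the paper: both proofs rest on the observation that, since $\tilde{\cY}=s(x_p)^{-1}y$, the two closed-loop driftless fields are positive scalar multiples of one another and hence trace identical paths up to a monotone time reparametrization. Your version simply makes the paper's two-line argument fully rigorous by exhibiting the explicit time change and noting where the uniform bound on $s(x_p)$ is actually needed.
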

\begin{proof}
Let $x_p$ and $\tilde{x_p}$ be the trajectories of the system under $u_{ij}$ and $\tilde{u}_{ij}$, respectively. Since both the dynamics and the controllers are linear, we have that $\dot{x_p}=s\dot{\tilde{x_p}}$ when evaluated at the same location. This implies that the two curves $x_p$ and $\tilde{x_p}$ are the same up to a reparametrization of the velocity.
\end{proof}

In fact, we can also relate the new controller to the conditions in the synthesis problem~\eqref{opt-dual}.
\begin{proposition}\label{prop:new CBF CLF conditions}
Assume that $s_{\min}<s(x)\leq s_{\max}$, and that $\tilde{u}=K_{ij}s\cY\in \cU$ for all $x\in \cX_{ij}$. Then $K_{ij}$ is a feasible solution for \eqref{opt-dual} with the modified CLF and CBF conditions:
\begin{align}
-( \cL_Bh_{ij}\tilde{u}+\tilde{c}_hh_{ij})&\leq 0, \\ \cL_BV_{ij}\tilde{u}+\tilde{c}_vV_{ij}&\leq 0,  \end{align}
where $\tilde{c}_h=s_{\min}\inverse c_h$ and $\tilde{c}_v=s_{\max}\inverse c_v$.
\end{proposition}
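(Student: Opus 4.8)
The plan is to exploit the exact pointwise relation between the bearing controller and the displacement controller that comes from the similar-triangles lemma of Section~\ref{sec:bearing}. Writing the bearing controller as $\tilde u_{ij}=K_{ij}\tilde{\cY}$ as in \eqref{eq:u tilde} and the underlying displacement controller as $u_{ij}=K_{ij}\cY$, and using that the stacked displacements satisfy $\cY=s(x_p)\,\tilde{\cY}$ (the lemma gives $l_i-x_p=s(x_p)(\tilde l_i-x_p)$ landmark-by-landmark), one gets $\tilde u_{ij}(x)=s(x_p)\inverse\,u_{ij}(x)$. Since $A=0$, the closed-loop field generated by the bearing controller is $\dot x=B\tilde u_{ij}=s(x_p)\inverse\,Bu_{ij}$, i.e.\ exactly the original closed-loop field multiplied pointwise by the strictly positive scalar $s(x_p)\inverse\in[s_{\max}\inverse,\,s_{\min}\inverse)$. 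So the whole proof reduces to carrying this state-dependent rescaling through the (relative-degree-one, hence first-order) CLF and CBF inequalities that $K_{ij}$ already satisfies.

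First I would invoke Lemma~\ref{lem1}, so that feasibility of \eqref{opt-dual} is equivalent to the primal statement that the CLF/CBF inequalities hold for every admissible $x$; thus $\cL_B V_{ij}\,u_{ij}+c_v V_{ij}\le 0$ on $\cX_{ij}$ and $\cL_B h_{ij}\,u_{ij}+c_h h_{ij}\ge 0$ on $\cC_0$ (the higher-order Lie terms in \eqref{cons:cbf}--\eqref{cons:clf} all drop because $A=0$ and $r=1$). For the Lyapunov part, $\cL_B V_{ij}\,\tilde u_{ij}=s\inverse\cL_B V_{ij}\,u_{ij}\le -s\inverse c_v V_{ij}\le -s_{\max}\inverse c_v V_{ij}$, where the last step uses $s\le s_{\max}$ together with $c_v V_{ij}\ge 0$; this is the modified CLF condition with $\tilde c_v=s_{\max}\inverse c_v$. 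For the barrier part, symmetrically, $\cL_B h_{ij}\,\tilde u_{ij}=s\inverse\cL_B h_{ij}\,u_{ij}\ge -s\inverse c_h h_{ij}\ge -s_{\min}\inverse c_h h_{ij}$, where now $s>s_{\min}$ and $c_h h_{ij}\ge 0$ on $\cC_0$; this is the modified CBF condition with $\tilde c_h=s_{\min}\inverse c_h$. The input constraint $\tilde u\in\cU$ on $\cX_{ij}$ is the remaining hypothesis of the statement, and the slack variables of \eqref{opt-dual} may be kept nonpositive (in fact strictly negative, rescaled analogously, whenever the original solution had a strict margin), so $K_{ij}$ is feasible for \eqref{opt-dual} with the advertised constants.

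The one point that needs care is the sign bookkeeping that forces the two different bounds on $s$: the CLF inequality is an upper bound, so the pointwise factor $s\inverse$ must be over-estimated by $s_{\max}\inverse$ while acting on the nonnegative quantity $c_v V_{ij}$, whereas the CBF inequality is a lower bound and $s\inverse$ must instead be over-estimated by $s_{\min}\inverse$ while acting on the nonnegative $c_h h_{ij}$. What legitimizes replacing the state-dependent $s(x_p)\inverse$ by a constant at all is precisely the sign-definiteness of $V_{ij}$ and $h_{ij}$ on the regions where the two constraints are imposed, which is guaranteed by construction ($V$ is the distance to the exit face, nonnegative on $\cX_{ij}$; the safe set is $\cC_0=\{h_{ij}\ge 0\}$). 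Finally, the CLF half only needs $s_{\max}<\infty$ — equivalently the uniform-boundedness hypothesis already used in Proposition~\ref{prop:speed reparametrization} — while $s_{\min}>0$ is what keeps the modified barrier constant $\tilde c_h$ finite.
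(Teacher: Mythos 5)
Your proof is correct and follows essentially the same route as the paper's (which simply says to divide the original CBF and CLF conditions by $s(x)$ and then apply the bounds $s_{\min}$, $s_{\max}$); you merely make explicit the pointwise identity $\tilde u_{ij}=s(x_p)\inverse u_{ij}$ and the sign-definiteness of $V_{ij}$ and $h_{ij}$ that justifies replacing the state-dependent factor by the two constants. The sign bookkeeping assigning $s_{\max}$ to the CLF side and $s_{\min}$ to the CBF side checks out.
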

\begin{proof} The claim follows by dividing the original CBF and CLF conditions by $s(x)$, and then using the bounds $s_{\min}$, $s_{\max}$.
\end{proof}

Note that the bounds on $s(x)$ translate to bounds on the distance between the cell $\cX_{ij}$ and the landmarks $l_i$

Taken together, Propositions~\ref{prop:speed reparametrization} and~\ref{prop:new CBF CLF conditions} show that the controller $K_{ij}$ found by assuming a displacement-based controller can also be used for the bearing-based case. However, the speed of the resulting trajectories might be more aggressive.
\section{Numerical Examples}\label{sec:examples}
To assess the effectiveness of the proposed algorithm, we run a set of validations using MATLAB simulations. While the optimization problem guarantees convergence of the robot to the stabilization point, in these experiments the velocity control input $u$ has been normalized to achieve constant velocities along the robot's trajectory.

The simulated MATLAB environment is presented in Fig.~\ref{fig:O1} and Fig.~\ref{fig:O2}. The Polygonal environment is decomposed into six convex cells. Each cell has four vertices. This experiment tests the effect of the $\eta$ on the smoothness of the controller when switching between cells. This experiment also represents the effect of choosing landmarks. In Fig.~\ref{fig:stationary_share}, all six cells share the same set of landmarks, which is a set of all vertices and are shown by blue markers. In this figure, the robot starts from the start point and passes through other cells to reach the goal point. As represented in Fig.~\ref{fig:stationary_share}, as we increase the $\eta$ the path becomes smoother. The non-regularized refers to the case where the cost function is computed such that the $\phi^t$ and $\phi_p$ are eliminated from the cost function, so we do not consider the smoothness of the path between cells.  In Fig.~\ref{fig:stationary_diff}, upper cells, shown by orange edges, take the measurements from the red landmarks, and the lower cells, shown by black edges, take measurements from the blue landmarks. Comparing the two Fig.~\ref{fig:stationary_share} and Fig.~\ref{fig:stationary_diff} shows that this approach produces a smoother path when all cells share the same set of landmarks. 
In Fig.~\ref{fig:o2-shared} and Fig.~\ref{fig:o2-diff}, the robot moves through the feasible path to cover the environment. Similar to Fig.~\ref{fig:O1}, we separate the cases where the cells get inputs from all landmarks and the cases where different landmarks are assigned to the upper and lower cells. 
\begin{figure*}[t]
  
  \centering
  \subfloat[Shared landmarks, converging to a point]{\label{fig:stationary_share}{\includegraphics[height=4cm]{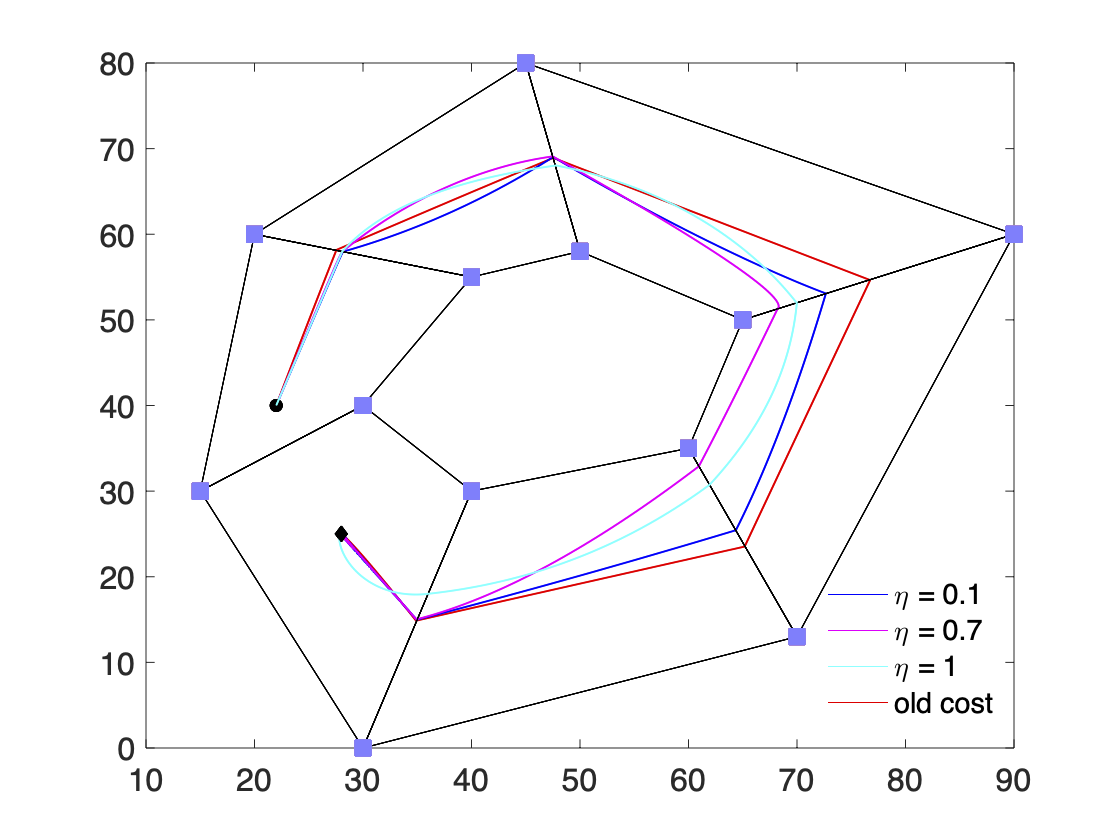}}}
  \subfloat[Two set of landmarks, converging to a point]{{\label{fig:stationary_diff}\includegraphics[height=4cm]{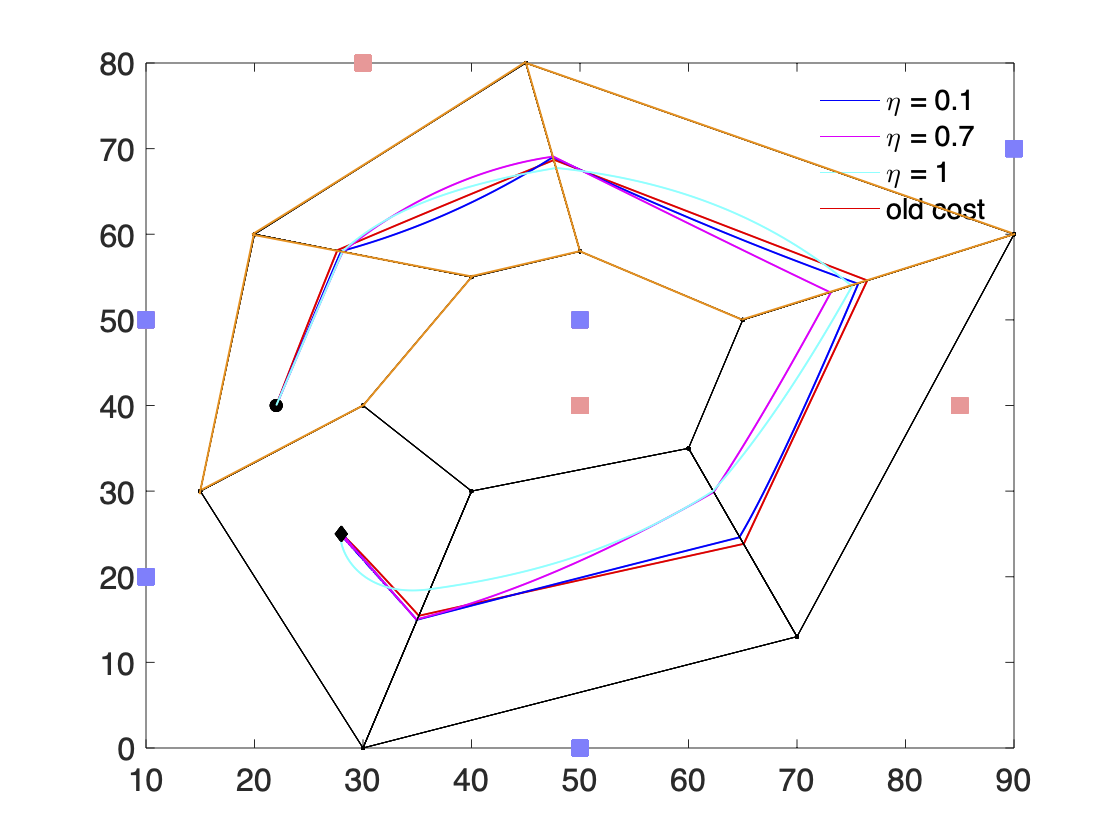} }}%
  \caption{Converging to a point while changing variable $\eta$. The black circle shows the starting point, while the black diamond represents the converging point. In Fig.~\ref{fig:stationary_share}, all cells share the same landmarks shown by blue squares. In Fig.~\ref{fig:stationary_diff}, Upper cells with orange edges use the orange landmarks, and lower cells with black edges use blue landmarks. For both cases, increasing variable $\eta$ makes the path smoother.}
  \label{fig:O1}
\end{figure*}

\begin{figure*}[t]
  \centering
  \subfloat[Shared landmarks,traversing]{\label{fig:o2-shared}{\includegraphics[height=4cm]{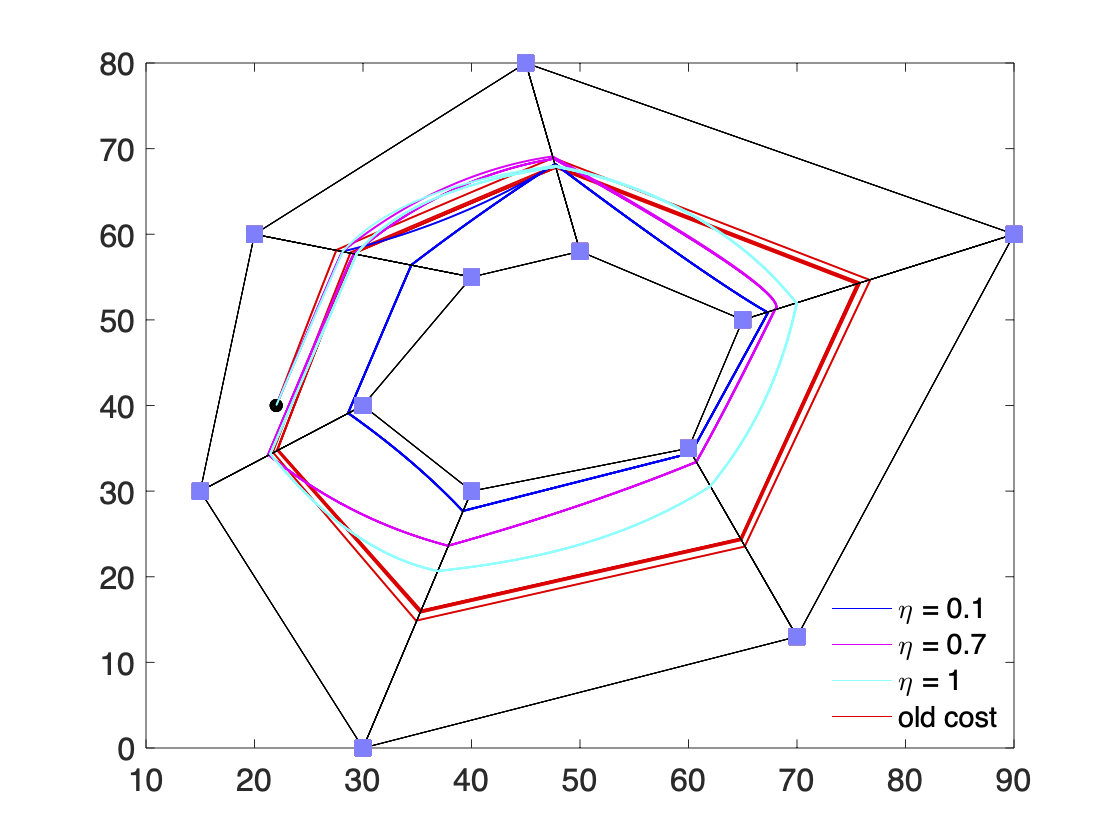}}}
  \subfloat[Two set of landmarks, traversing]{{\label{fig:o2-diff}\includegraphics[height=4cm]{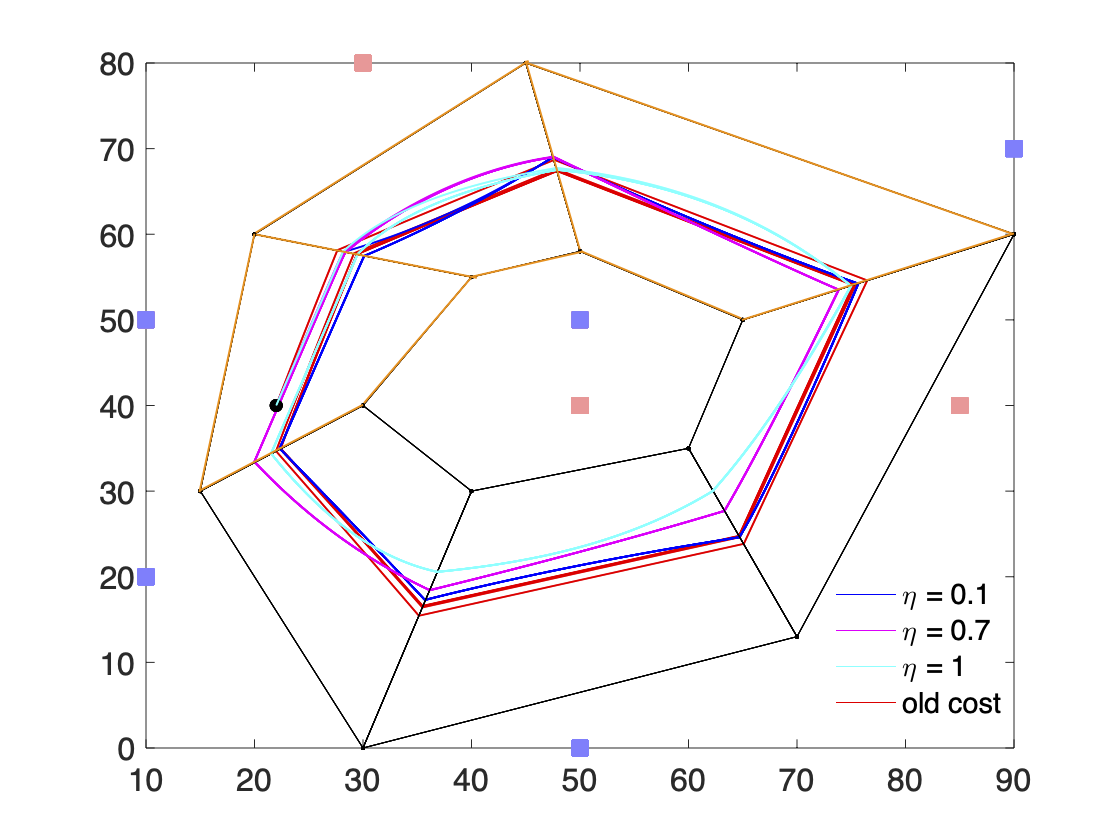} }}%
  \caption{Making loops changing variable $\eta$. The black circle shows the starting point. In Fig.~\ref{fig:o2-shared}, all cells share the same set of landmarks which are shown by blue squares. In Fig.~\ref{fig:o2-diff}, Upper cells with orange edges use the orange landmarks, and lower cells with black edges use blue landmarks. For both cases, increasing variable $\eta$ makes the path smoother.}
  \label{fig:O2}
\end{figure*}


\section{CONCLUSIONS}\label{sec:conclusion}
In this work, we proposed a novel approach to synthesize a set of output feedback controllers on a cell decomposition of the environment; such decomposition is generated by a simplified version of the sampling-based \rrtstar{} method. We build a robust output feedback controller for each cell; the controller takes inputs on the relative displacements between a set of landmarks positions and the robot. The controllers for all cells are found simultaneously as the solution of a robust min-max Linear Program. The optimization includes CLF and CBF constraints to guarantee the stability and safety of the system and a new regularization term to smooth the transitions between consecutive cells of the environment. 
In addition, we discuss strategies for handling practical problems deriving from the use of monocular cameras, such as limited fields of view (in which case the controller can be re-parameterized without solving a new optimization problem) and bearing measurements with unknown depths (in which case we propose a new triangulation approach to scale the bearing measurements before using them in the output feedback controller). 
We test the proposed algorithm in simulations to evaluate the performance of our approach under different measurements and the influence of the regularization term on the shape of the final path.
\bibliographystyle{IEEEtran}
\bibliography{arixv_mahroo.bib}

\end{document}
@book{robust_opt,
  title={Robust optimization},
  author={Ben-Tal, Aharon and El Ghaoui, Laurent and Nemirovski, Arkadi},
  volume={28},
  year={2009},
  publisher={Princeton University Press}
}